\newcommand{\sgn}{\mathrm{sgn}}
\newcommand{\tinyspace}{\mspace{1mu}}
\newcommand{\microspace}{\mspace{0.5mu}}
\newcommand{\norm}[1]{\left\lVert\tinyspace#1\tinyspace\right\rVert}
\declaretheorem[
]{lemma}
\def\<{\langle}
\def\>{\rangle}
\def \lket {\left|}
\def \rket {\right\rangle}
\def \lbra {\left\langle}
\def \rbra {\right|}
\newcommand{\ket}[1]{\lket\microspace #1 \microspace\rket}
\newcommand{\bra}[1]{\lbra\microspace #1 \microspace\rbra}
\let\vec\bm
\newcommand{\beq}{\begin{equation}}
\newcommand{\eeq}{\end{equation}}
\begin{document}
\title{Optimal function estimation with photonic quantum sensor networks}
\date{\today}
\author{Jacob Bringewatt}
\thanks{These authors contributed equally.}
\affiliation{Joint Center for Quantum Information and Computer Science, NIST/University of Maryland College Park, Maryland 20742, USA}
\affiliation{Joint Quantum Institute, NIST/University of Maryland College Park, Maryland 20742, USA}
\author{Adam Ehrenberg}
\thanks{These authors contributed equally.}
\affiliation{Joint Center for Quantum Information and Computer Science, NIST/University of Maryland College Park, Maryland 20742, USA}
\affiliation{Joint Quantum Institute, NIST/University of Maryland College Park, Maryland 20742, USA}
\author{Tarushii Goel}
\thanks{These authors contributed equally.}
\affiliation{Joint Center for Quantum Information and Computer Science, NIST/University of Maryland College Park, Maryland 20742, USA}
\affiliation{Joint Quantum Institute, NIST/University of Maryland College Park, Maryland 20742, USA}
\affiliation{Department of Physics, Massachusetts Institute of Technology, Cambridge, Massachusetts 02139, USA}
\author{Alexey V. Gorshkov}
\affiliation{Joint Center for Quantum Information and Computer Science, NIST/University of Maryland College Park, Maryland 20742, USA}
\affiliation{Joint Quantum Institute, NIST/University of Maryland College Park, Maryland 20742, USA}
\begin{abstract}
    The problem of optimally measuring an analytic function of unknown local parameters each linearly coupled to a qubit sensor is well understood, with applications ranging from field interpolation to noise characterization. Here, we resolve a number of open questions that arise when extending this framework to Mach-Zehnder interferometers and quadrature displacement sensing. In particular, we derive lower bounds on the achievable mean square error in estimating a linear function of either local phase shifts or quadrature displacements. In the case of local phase shifts, these results prove, and somewhat generalize, a conjecture by Proctor \emph{et al.} [arXiv:1702.04271 (2017)]. For quadrature displacements, we extend proofs of lower bounds to the case of arbitrary linear functions. We provide optimal protocols achieving these bounds up to small (multiplicative) constants and describe an algebraic approach to deriving new optimal protocols, possibly subject to additional constraints. Using this approach, we prove necessary conditions for the amount of entanglement needed for any optimal protocol for both local phase and displacement sensing.
 \end{abstract}
\maketitle

\section{Introduction}
In quantum metrology, entangled states of quantum sensors are used to try to obtain a performance advantage in estimating an unknown parameter or parameters (e.g.,~field amplitudes) coupled to the sensors. In addition to this practical advantage of quantum sensing, the theory of the ultimate performance limits for parameter estimation tasks is deeply related to a number of topics of theoretical interest in quantum information science, such as resource theories~\cite{piani2016robustness}, the geometry of quantum state space~\cite{braunstein1994statistical}, quantum speed limits~\cite{taddei2013quantum,deffner2017quantum,garciapintos2022unifying}, and quantum control theory~\cite{deffner2017quantum}. 

Initial experimental and theoretical work on quantum sensing focused on optimizing the estimation of a single unknown parameter (see, e.g., Ref.~\cite{pezze2018quantum} for a review). More recently, the problem of distributed quantum sensing has become an area of particular interest~\cite{zhang2021distributed}. Here, one considers a network of quantum sensors, each coupled to a local unknown parameter. The prototypical task in this setting is to measure some function or functions of these parameters. 
In this context, the task of optimally measuring a single linear function $q(\vec\theta)$ of $d$ independent local parameters $\vec\theta=(\theta_1,\cdots,\theta_d)^T$ is particularly well studied both theoretically~\cite{proctor2017networked,ge2018distributed,eldredge2018optimal,proctor2018multiparameter,zhuang2018distributed,xia2019repeater,gross2020one,rubio2020quantum,oh2020optimal,triggiani2021heisenberg, ehrenberg2023minimum,oh2022distributed,malitesta2021distributed,yang2023quantum} and experimentally~\cite{guo2020distributed,xia2020demonstration, liu2021distributed,zhao2021field_etal}. In addition to its independent utility (i.e.,~for measuring an average of local fields in some region), linear function estimation serves as a key subtask of more general metrological tasks, such as measuring an analytic function of the unknown parameters~\cite{qian2019heisenberg}, measuring an analytic function of dependent parameters~\cite{qian2021optimal, hamann2022approximate}, or measuring multiple functions~\cite{rubio2020bayesian, bringewatt2021protocols}.

For qubit sensors, the asymptotic limits on performance for these function estimation tasks are rigorously understood, and techniques for generating optimal protocols subject to various constraints, such as limited entanglement between sensors, are known~\cite{ehrenberg2023minimum}. However, despite extensive theoretical and experimental research on distributed quantum sensing for photonic quantum sensors (see, e.g.,~\cite{polino2020photonic,zhang2021distributed} for reviews), the asymptotic performance limits for function estimation are not yet rigorously established. Here, we close this gap, proving an ultimate bound on asymptotic performance, as measured by the mean square error of the estimator, for measuring a linear function of unknown parameters each coupled to a different photonic mode via either (1) the number operator $\hat n$ or (2) a field-quadrature operator, chosen without loss of generality to be the momentum quadrature $\hat{p}:=i(\hat{a}^\dagger-\hat{a})/2$. That is, we are interested in determining a function of either unknown local phase shifts or unknown quadrature displacements. For case (1), our primary focus, we derive this bound subject to a strict constraint on photon number, proving a long-standing conjecture appearing in Ref.~\cite{proctor2017networked}. In case (2), we derive our bound subject to a constraint on the average photon number, which is more natural in this setting as quadrature displacements are not photon-number conserving. Here, our results are consistent with existing bounds in the literature~\cite{xia2019repeater}, but, for completeness, we include derivations in this setting using an equivalent mathematical framework to the number operator case and the qubit sensor case~\cite{ehrenberg2023minimum}. This allows for a natural comparison of the various performance limits and resource requirements of function estimation in quantum sensor networks and opens the door to designing new, information-theoretically optimal protocols in the asymptotic limit of sufficient data.

The rest of the paper proceeds as follows. In Sec.~\ref{sec:setup}, we formally set up the problem of interest and provide useful notation. In Sec.~\ref{sec:bounds} we prove lower bounds on the mean-squared error of an estimator for arbitrary linear functions for both number operator and displacement operator generators. We then study protocols that saturate these bounds in Sec.~\ref{sec:protocols}. Finally, we discuss other entanglement-restricted optimal protocols in Sec.~\ref{sec:entanglement}. 

\section{Problem setup}\label{sec:setup}
Consider a sensor network of $d$ optical modes each coupled to an unknown parameter $\theta_j$ for $j\in\{1, \cdots, d\}$ via  
\begin{equation}\label{eq:initialH}
    \hat{H}(s) = \sum_{j=1}^d\theta_j \hat g_j +\hat{H}_c(s) =: \vec\theta\cdot\hat{\vec g}+\hat{H}_c(s),
\end{equation}
where $\hat g_j$ is the local coupling Hamiltonian and boldface denotes vectors. Here, we consider the following two cases:
\begin{subequations}
\begin{align}
    \hat g_j&:=\hat n_j = \hat a^\dagger_j \hat a_j, \label{eq:particle_number_coupling}\\
    \hat g_j&:=\hat p_j = \frac{i}{2}(\hat a^\dagger_j-\hat a_j), \label{eq:displacement_coupling}
\end{align}
\end{subequations}
where $\hat a^\dagger_j, \hat a_j$ are the bosonic creation and annihilation operators acting on mode $j$, $\hat{n}_j$ is the number operator acting on mode $j$, and $\hat p_j$ is the momentum- ($\hat{p}$-) quadrature on mode $j$. The choice of $\hat{p}$ quadrature is, of course, arbitrary. All results apply equally well for coupling to any quadrature. The $\vec\theta$-independent, time-dependent Hamiltonian $\hat H_c(s)$ is a control Hamiltonian, possibly including coupling to an arbitrary number of ancilla modes. Here, $s\in[0,t]$, where $t$ is the total sensing time.

In either case, our task is to measure a linear function $q(\vec\theta)=\vec\alpha\cdot\vec\theta$ of the local field amplitudes $\vec\theta$ where $\vec\alpha\in\mathbb{Q}^d$ is a vector of rational coefficients. (The restriction to rational coefficients is due to the discreteness of the resources---the number of photons---available in this problem; in the case we are interested in---large photon numbers---this is only a technical point.) To accomplish this task, we consider probe states with either fixed photon number $N$ or fixed average photon number $\overline{N}$. Given such probe states, we consider encoding the unknown parameters into the state via the unitary evolution generated by the Hamiltonian in Eq.~(\ref{eq:initialH}).

We will consider both an unrestricted control Hamiltonian and a control Hamiltonian fixed to have the form
\begin{equation}\label{eq:controlHnumber}
\hat H_c(s)=\hat{h}_c(s)\delta(s-j\Delta t),
\end{equation}
where $\hat{h}_c(s)$ is a (unitless) Hermitian operator, $\delta(s)$ is the Dirac delta function, $\Delta t:=t/M$ is the time for a single application of the encoding unitary $\exp(-iH\Delta t)$. The index $j\in \{1, \cdots, M\}$ indexes these applications, where $M$ is the total number of applications. This construction is motivated by the fact that typical physical implementations of a number operator coupling, e.g.,~in a Mach-Zehnder interferometer, and displacement operator coupling, e.g.,~via an electro-optical modulator (EOM), often do not allow for intermediate controls at arbitrary times. Therefore, when we fix our control Hamiltonians to be described by Eq.~(\ref{eq:controlHnumber}), we have limited any controls to be applied between each pass through these optical elements; for simplicity, we have assumed that these control operations can be implemented on a timescale much shorter than the timescale of phase accumulation. Without loss of generality, we will let $\Delta t=1$ for the rest of this paper, implying that (in this setting) $t=M$. Therefore, the parameter encoding procedure for the photon number coupling is done via the unitary
\begin{equation}\label{eq:full_unitary}
U=U^{(M)}VU^{(M-1)}V\cdots U^{(1)}V=\prod_{m=1}^M (U^{(m)}V),
\end{equation}
where $V:=\exp(-i\hat{\vec{g}}\cdot\vec\theta)$ and $U^{(m)}$ for $m\in\{1,\cdots, M\}$ denote the unitaries applied between passes. Here, by pass, we mean a single application of the unitary $V$. We use the convention that the product operation left multiplies.

In both settings, it is worth emphasizing that, while our information-theoretic results lower bounding the asymptotically achievable mean square error of an estimate $\tilde{q}$ of $q$ will apply to any protocol within the framework(s) described above, the explicit protocols we will develop will use finite ancillary modes and finite controls.

\section{Lower bounds}\label{sec:bounds}
Following the approach of Refs.~\cite{eldredge2018optimal,ehrenberg2023minimum}, we compute lower bounds on the mean square error $\mathcal{M}$ of an estimator $\tilde{q}$ of $q$ 
by rewriting the Hamiltonian in Eq.~(\ref{eq:initialH}) as 
\begin{equation}\label{eq:Hnewbasis}
    H(s)=\sum_{j=1}^d(\vec\alpha^{(j)}\cdot\vec\theta)(\vec\beta^{(j)}\cdot\hat{\vec{g}}) + \hat{H}_{c}(s),
\end{equation}
for some (time-independent) choice of basis vectors $\{\vec\alpha^{(j)}\}_{j=1}^d$, where $\vec\alpha^{(1)}:=\vec\alpha$ and $\{\vec\beta^{(j)}\}_{j=1}^d$ is a dual basis such that $\vec\alpha^{(i)}\cdot\vec\beta^{(j)}=\delta_{ij}$. 
The vectors $\{\vec\alpha^{(j)}\}_{j=1}^d$ are associated with a change of basis $\vec\theta\rightarrow\vec{q}$ where $q_j:=\vec\alpha^{(j)}\cdot\vec\theta$ such that $q_1=q$; that is, $\vec\alpha^{(1)}=:\vec\alpha$ with corresponding dual vector $\vec\beta^{(1)}=:\vec\beta$. Then we can define a $\vec\beta$-parameterized generator of translations with respect to the quantity $q$ as
\begin{equation}\label{eq:defgq}
    \hat{g}_{q,\vec\beta}:=\min_{q^{(2)}, \cdots, q^{(d)}}\frac{\partial \hat{H}}{\partial q}\Bigg|_{q^{(2)}, \cdots, q^{(d)}}=\vec\beta\cdot\hat{\vec g}.
\end{equation}

Armed with Eq.~(\ref{eq:defgq}), we can write a bound on $\mathcal{M}$ in terms of a single-parameter quantum Cram\'{e}r-Rao bound~\cite{giovannetti2006quantum, boixo2007generalized,polino2020photonic}
\begin{equation}\label{eq:qcrb}
    \mathcal{M}\geq \frac{1}{\mu\mathcal{F}(q|\vec\beta)},
\end{equation}
where $\mathcal{F}(q|\vec\beta)$ is the quantum Fisher information with respect to $q$, given some choice of fixing the extra $d-1$ degrees of freedom in our problem, as specified by the vector $\vec\beta\in\mathbb{R}^d$ such that $\vec\alpha\cdot\vec\beta=1$. Any such single-parameter bound is a valid lower bound as fixing extra degrees of freedom can only give us more information about the parameter $q$ (see below for mathematical details). $\mu$ is the number of experimental repetitions. This bound holds for an unbiased estimator $\tilde q$. When deriving our bounds, we will restrict ourselves to single-shot Fisher information and set $\mu=1$~\footnote{Clearly, with $\mu=1$, we are not guaranteed the existence of an unbiased estimator, so there is some subtlety in this restriction. The choice is sufficient for determining bounds and optimal probe states, but, when considering measurements to extract the quantity of interest, realistic protocols must use more than one shot. For instance, robust phase estimation allows for $\mu=\mathcal{O}(1)$, while still allowing us to obtain an unbiased estimator that achieves the quantum Cram\'{e}r-Rao bound up to a multiplicative constant~\cite{kimmel2015robust,kimmel2015robusterratum,belliardo2020achieving}. In Appendix~\ref{app:robust-phase-estimation}, for completeness, we briefly summarize this approach. See also, Refs.~\cite{yang2019attaining,suzuki2020quantum} and Ref.~\cite{ehrenberg2023minimum} for further discussion of these issues.}. Quantum Fisher information is maximized for pure states, so restricting ourselves to pure states and unitary encoding of the unknown parameters into the state we can write
\begin{equation}\label{eq:qcrb2}
    \mathcal{F}(q|\vec\beta)\leq 4 t^2 \max_\rho [(\Delta\hat g_{q,\vec\beta})_\rho]^2,
\end{equation}
where $\hat g_{q,\vec\beta}$ is the $\vec\beta$-parameterized generator of translations with respect to the unknown function $q$. The variance $[\Delta(\hat g_{q,\vec\beta})_\rho]^2$ is taken with respect to a pure probe state $\rho=\ket{\psi}\bra{\psi}$.

Ultimately, we seek a choice of new basis that yields the tightest possible bound on the quantum Fisher information $\mathcal{F}(q)$. This choice is determined by the solution to~\footnote{Note the use of a minimax as opposed to a maximin in Eq.~(\ref{eq:minprob}). This follows from the fact that the minimax of some objective function is always greater than or equal to the maximin and we seek to maximize the quantum Fisher information.} 
\begin{align}\label{eq:minprob}
    \min_{\vec\beta} \max_\rho [\Delta(\vec\beta\cdot\hat{\vec{g}})_\rho]^2, \quad
    \text{subject to }\, \vec\alpha\cdot\vec\beta=1.
\end{align}
Let $(\vec\beta^*,\rho^*)$ be a solution for this optimization problem. Then we can rewrite the single-shot version of Eq.~(\ref{eq:qcrb}) as
\begin{equation}\label{eq:bnd_betastar}
    \mathcal{M}\geq \frac{1}{4 t^2[\Delta(\vec\beta^*\cdot\hat{\vec g})_{\rho^*}]^2}. 
\end{equation}
This bound can be understood as corresponding to the optimal choice of an imaginary single parameter scenario, where we have fixed $d-1$ of the $d$ parameters controlling the evolution of the state, leaving only the parameter of interest $q$ free to vary. While this requires giving ourselves information that we do not have, additional information can only reduce $\mathcal{M}$, and, therefore, any such choice provides a lower bound on $\mathcal{M}$ (via single-parameter bounds) when we do not have such information. 
While not guaranteed by this method of derivation, we shall see that such bounds are saturable, up to small multiplicative constants.

Constraints can be placed on the probe state $\rho$ depending on the physical generators coupled to the parameters of interest: as previously discussed, in this work we consider the constraints of fixed photon number $N$ for the generator $\hat{n}_j$ and fixed average photon number $\overline{N}$ for the generator $\hat{p}_j$. The rationale behind these constraints is as follows. $\hat{p}$ does not conserve photon number, hence it does not make sense to restrict to a fixed photon number sector when coupling to quadrature operators, and, thus, average photon number is the natural constraint. For $\hat{n}$, on the other hand, we must work in the fixed photon sector, as using fixed average photon number allows for the construction of pathological probe states enabling arbitrarily precise sensing. In particular, consider the state 
\begin{equation}
    \ket{\psi_{a}} = \sqrt{\frac{a-1}{a}}\ket{0} + \sqrt{\frac{1}{a}}\ket{a \overline{N}}.
\end{equation}
It is easy to see that $\ket{\psi_{a}}$ has mean photon number $\overline{N}$ and variance $(a-1)\overline{N}^{2}$. Hence, even for fixed $\overline{N}$, letting $a$ get arbitrarily large allows for an arbitrarily large variance, and hence arbitrarily precise sensing.

Leaving the details of the calculation to Appendix~\ref{app:bnd-phase}, solving the above optimization problem for $\hat g_j=\hat n_j$ restricted to probe states with exactly $N$ photons yields
\begin{align}\label{eq:finalbnd}
    \mathcal{M}&\geq 
    \frac{\max\left\{\norm{\vec\alpha}_{1,\mathcal{P}}^2, \norm{\vec\alpha}_{1,\mathcal{N}}^2\right\}}{N^2t^2},
\end{align}
where $\mathcal{P}:=\{j \,|\, \alpha_j\geq 0\}$ and $\mathcal{N}:=\{j \,|\, \alpha_j< 0\}$. In the second line, we use the notation
\begin{align}
    \norm{\vec\alpha}_{1,\mathcal{S}}&:=\sum_{i \in \mathcal{S}} |\alpha_i|, 
\end{align}
where $\mathcal{S}\in\{\mathcal{P},\mathcal{N}\}$. For the rest of the paper, we assume without loss of generality that we are in the case that $\norm{\vec\alpha}_{1,\mathcal{P}} \geq \norm{\vec\alpha}_{1,\mathcal{N}}$ to simplify our expressions. In the special case where $\vec{\alpha}$ possesses only positive coefficients (i.e., $\mathcal{N}=\emptyset$), 
\begin{equation}\label{eq:finalbnd_positivealpha}
    \mathcal{M}\geq \frac{\norm{\vec\alpha}_1^2}{N^2t^2},
\end{equation}
proving a long-standing conjecture from Ref.~\cite{proctor2017networked} that this is the minimum attainable variance for $\vec\alpha\in\mathbb{Q}^d$ with $\vec\alpha\geq 0$ and $N\vec{\alpha}\in\mathbb{N}^{d}$. This is our primary result. 

Similarly, for the case of local quadrature displacements restricted to probe states with average photon number $\overline{N}$, we obtain the following bound:
\begin{equation}\label{eq:dispbnd}
    \mathcal{M}\geq\frac{\norm{\vec\alpha}_2^2}{4\overline{N}t^2}-\mathcal{O}\left(\frac{d\norm{\vec\alpha}_2^2}{\overline{N}^2t^2}\right).
\end{equation}
Equation~(\ref{eq:dispbnd}) is a minor generalization of the results in Refs.~\cite{zhang2021distributed,xia2019repeater}, extended to allow for negative coefficients and for arbitrary non-Gaussian probe states. Therefore, for completeness, we include a reminder of the arguments from Refs.~\cite{zhang2021distributed,xia2019repeater} along with our more general derivation in Appendix~\ref{app:bnd-displacement}.

We can compare the bounds in Eqs.~(\ref{eq:finalbnd}) and (\ref{eq:dispbnd}) to the corresponding bounds on the mean square error obtainable by separable protocols---that is, those using separable probe states such that each parameter $\theta_i$ is measured individually using an optimized partition of the available photons, and then these estimates are used to compute $q$. In particular, for number operator coupling and fixed photon number states, using $\eta_j=\frac{|\alpha_j'|}{\norm{\vec\alpha'}_1}N$ photons ($\alpha'_j:=\alpha_j^{2/3}$) in mode $j$, it holds that~\cite{proctor2017networked}
\begin{equation}\label{eq:sepbnd}
    \mathcal{M}_\mathrm{sep}\geq \frac{\norm{\vec\alpha'}_{2/3}^2}{N^2t^2},
\end{equation}
where $\norm{\cdot}_{2/3}$ denotes the Schatten $p$-function
\begin{equation}
    \norm{\vec{v}}_{p} = \left(\sum_{i}v_{i}^{p}\right)^{1/p}
\end{equation}
with $p=2/3$. When $p \in [1,\infty]$, this function is a norm, but for $p \in (0,1)$ it is not, as it does not satisfy the property of absolute homogeneity, but it still provides a convenient notational shorthand.  

Performing a similar optimization for the case of displacement coupling and fixed average photon number, one obtains 
\begin{equation}\label{eq:sepbnddisp}
    \mathcal{M}_{\mathrm{sep}}\geq \frac{\norm{\vec\alpha}_1^2}{4\overline{N}t^2}+\mathcal{O}\left(\frac{1}{\overline{N}^2t^2}\right),
\end{equation}
where the optimum division of photons is given by using $\eta_j=\frac{|\alpha_j|}{\norm{\vec\alpha}_1}N$ photons in mode $j$. A non-closed-form version of this bound can be found in Ref.~\cite{zhuang2018distributed} in the case where $\overline{N}$ is finite. One recovers our result in the asymptotic in $\overline N$ limit.

Consequently, in both the phase and displacement sensing settings, the achievable advantage due to entanglement between modes is fully characterized by the difference between the vector $p$ norm of $\vec\alpha$ with $p=\frac{2}{3},1$ or $p=1,2$, respectively. By generalized H{\"o}lder's inequality, $\norm{\vec\alpha}_{2/3}^2\leq d\norm{\vec\alpha}_{1}^2$
and $\norm{\vec\alpha}_1^2\leq d\norm{\vec\alpha}_2^2$. 
Both inequalities are saturated for any ``average-like'' function with $|\vec\alpha|\propto(1,1,\cdots 1)^T$. In both cases, we obtain a $\mathcal{O}(1/d)$ improvement in precision due to entanglement, consistent with the so-called Heisenberg scaling in the number of sensors $d$. This is consistent with results for qubits in Ref.~\cite{eldredge2018optimal}, where the best improvement between the separable and entangled bounds occurs when measuring an average-like function. For the case of phase sensing, the optimal performance, including constants, is obtained when $\norm{\vec\alpha}_{1,\mathcal{P}}^2=\norm{\vec\alpha}_{1,\mathcal{N}}^2=\norm{\vec\alpha}_1/2$ (which occurs when the vector $\vec\alpha$ is half positive ones and half negative ones).
 
\section{Protocols}\label{sec:protocols}
\subsection{Existing protocols}
The bounds established in the previous section are all saturable, up to small multiplicative constants, using protocols that exist in the literature, or slight variations thereof. In particular, Refs.~\cite{proctor2017networked, proctor2018multiparameter} present a protocol for estimating a linear function of local phase shifts with positive coefficients (i.e., $\vec\alpha\geq 0$) which achieves the bound in Eq.~(\ref{eq:finalbnd}) up to a small multiplicative constant. This protocol makes use of a so-called proportionally weighted N00N state over $d+1$ modes, 
\begin{align}\label{eq:noon}
    \ket{\psi}\propto \bigg|N\frac{\alpha_1}{\norm{\vec\alpha}_1}, \cdots, N\frac{\alpha_d}{\norm{\vec\alpha}_1}, 0\bigg \rangle+\bigg|0,\cdots, 0,N\bigg\rangle,
\end{align}
where we have expressed the state in an occupation number basis over $d+1$ modes and have dropped the normalization for concision. The last mode serves as a reference mode. Observe that, for this state to be well defined, it is essential that $\vec\alpha/\norm{\vec{\alpha}}_{1}\in\mathbb{Q}^d$ and that $N$ is sufficiently large that the resulting occupation numbers are integers. Details of how protocols using this probe state work and how they generalize to the case of negative coefficients are provided in Appendix~\ref{app:protocols}. A description of how to achieve the separable bound in Eq.~(\ref{eq:sepbnd}) is provided in Appendix~\ref{app:bnd-displacement}.

Similarly, in the case of measuring a linear function of displacements using states with fixed average photon number, Ref.~\cite{zhuang2018distributed} provides a protocol that, up to small multiplicative constants, saturates the bound in Eq.~(\ref{eq:dispbnd}), and a separable protocol that, again up to small constants, achieves the bound in Eq.~(\ref{eq:sepbnddisp}). Interestingly, these protocols require only Gaussian probe states, indicating that these states are optimal. In particular, these protocols make use of an initial single-mode squeezed state, followed by a properly constructed beam-splitter array to prepare a multimode entangled probe state with the appropriate sensitivity to quadrature displacements in each mode. Homodyne measurements on each mode can then be used to extract the function of interest. Consistent with this fact, our separable lower bound matches the Gaussian state-restricted bound obtained in Ref.~\cite{zhuang2018distributed} and the bound for arbitrary states derived in Ref.~\cite{xia2019repeater} for the particular case of measuring an average.

\subsection{Algebraic conditions for new protocols}\label{s:new-prot} 
Other protocols are possible and can be derived via a simple set of algebraic conditions. In particular, for a probe state to exist saturating the bound in Eq.~(\ref{eq:bnd_betastar}), or its specific versions in Eqs.~(\ref{eq:finalbnd}) and (\ref{eq:dispbnd}), we require the existence of an optimal choice of basis transformation $\vec\theta\rightarrow\vec q$ such that knowing $q_j$ for $j>1$ yields no information about $q=q_1$. Mathematically, this means that the quantum Fisher information matrix~\cite{liu2019quantum} with respect to the parameters $\vec q$ must have the following properties:
\begin{subequations}
\begin{align}\label{eq:condonfim}
    \mathcal{F}(\vec q)_{11}&=4 t^2 [\Delta(\vec\beta^*\cdot{\vec{\hat g}})_{\rho^*}]^2,\\
    \mathcal{F}(\vec q)_{1i}&=\mathcal{F}(\vec q)_{i1}=0\quad (\forall\, i\neq 1),\label{eq:condonfimoffdiagonal}
\end{align}
\end{subequations}
Recall that $(\vec\beta^*,\rho^*)$ are the solution to the minimax problem in Eq.~(\ref{eq:minprob}).
We can reexpress these conditions in terms of the quantum Fisher information matrix with respect to $\vec\theta$ as
\begin{subequations}
\begin{align}\label{eq:condonfim_theta}
    (\vec\beta^*)^T\mathcal{F}(\vec \theta)\vec\beta^*&=4 t^2 [\Delta(\vec\beta^*\cdot{\vec{\hat g}})_{\rho^*}]^2,\\
    (\vec\beta^{*})^T\mathcal{F}(\vec \theta)\vec\beta^{(i)}&=(\vec\beta^{(i)})^T\mathcal{F}(\vec \theta)\vec\beta^*=0\quad (\forall\, i\neq 1) \label{eq:condonfimoffdiagonal_theta}.
\end{align}
\end{subequations}
Then, using $\vec\alpha^{(i)}\cdot\vec\beta^{(j)}=\delta_{ij}$, we obtain the condition
\begin{equation}\label{eq:finalcondsat}
    \mathcal{F}(\vec\theta)\vec\beta^*=4 t^2[\Delta(\vec\beta^*\cdot{\vec{\hat g}})_{\rho^*}]^2\vec\alpha.
\end{equation}

Matrix elements of $\mathcal{F}(\vec\theta)$ for pure probe states and unitary evolution are given via
\begin{align}\label{eq:fim}
    \mathcal{F}(\vec\theta)_{ij}&=4\left[\frac{1}{2}\langle\{\mathcal{H}_i,\mathcal{H}_j\}\rangle-\langle\mathcal{H}_i\rangle\langle\mathcal{H}_j\rangle\right],
\end{align}
where $\mathcal{H}_i=-iU^\dagger\partial_i U$ with $\partial_i:=\partial/\partial\theta_i$, $U$ is the unitary generated by Eq.~(\ref{eq:initialH}) and the expectation values are taken with respect to the initial probe state~\cite{liu2019quantum}. 

We refer to protocols that make use of probe states and controls so that Eq.~(\ref{eq:finalcondsat}) is satisfied as optimal. However, we caution that the existence of an optimal probe state does not imply the existence of measurements on this state that allow one to extract an estimate of the parameter $q$ saturating the lower bounds we have derived.  This issue of the optimal measurements to extract parameters is also discussed extensively in, e.g.~Ref.,~\cite{hayashi2018resolving}, with some convenient, nearly optimal, protocols presented in Refs.~\cite{kimmel2015robust,kimmel2015robusterratum,belliardo2020achieving}. Such methods are the origin of the ``small multiplicative constants'' that arise in the explicit protocols above. In fact, lower bounds derived via the quantum Cram\'{e}r-Rao bound can be obtained only up to a constant $\geq \pi^2$~\cite{gorecki2020pi}. See Appendix~\ref{app:robust-phase-estimation} for a brief explanation of these ideas.

For the particular cases considered in this paper, $\vec\beta^*$ has been explicitly calculated (see Appendices~\ref{app:bnd-phase} and \ref{app:bnd-displacement}), so Eq.~(\ref{eq:finalcondsat}) can be expressed in a more meaningful form. For number operator coupling, we obtain the condition
\begin{equation}\label{eq:finalcond}
    \sum_{i\in\mathcal{P}}\mathcal{F}(\vec\theta)_{ij}=\frac{N^2t^2}{\norm{\vec\alpha}_{1,\mathcal{P}}}\alpha_j,
\end{equation}
for all $j$. Similarly, for the quadrature coupling, an optimal protocol requires 
\begin{equation}\label{eq:finalcondquad}
    \mathcal{F}(\vec\theta)\vec\alpha\sim4\overline{N}t^2\vec\alpha,
\end{equation}
where $\sim$ denotes asymptotically in $\overline{N}$. Equations~(\ref{eq:finalcond}) and (\ref{eq:finalcondquad}) provide a generic route to finding new protocols: consider a set of parameterized families of probe states $\mathcal{T}$ that one can coherently switch between using available controls $\hat H_c(t)$ (here, a ``family'' of states refers to a particular superposition of Fock states with an arbitrary relative phase). One can then calculate $\mathcal{F}(\vec\theta)$ via Eq.~(\ref{eq:fim}) and allocate the time spent in a particular family of states such that the associated quantum Fisher information condition is achieved. As a limiting case, one could consider $|\mathcal{T}|=1$, removing the necessity of coherent control; the protocols considered in the previous section are of this sort (and, in Appendix~\ref{app:protocols}, we show that these protocols do, indeed, achieve the saturability conditions). 

The possible choices for families of states $\mathcal{T}$ that allow for such a solution are actually quite limited, even given access to arbitrary control Hamiltonians and ancilla modes. In particular, we prove the following in the case where $\hat g_j:=\hat n_j$:

\begin{lemma}\label{lemma:opt_states}
Any optimal protocol using $N$ photons and $M$ passes through interferometers with a coupling as in Eq.~(\ref{eq:initialH}) with $\hat{g}_j=\hat{n}_j$ requires that, for every pass $m$, the probe state $\ket{\psi_m}$ be of the form
\begin{equation}
    \ket{\psi_m} \propto \ket{\vec{N}(m)}_{\mathcal{P}}\ket{\vec{0}}_{\mathcal{NR}} + e^{i\varphi_m}\ket{\vec{0}}_{\mathcal{P}}\ket{\vec{N'}(m)}_{\mathcal{NR}},
\end{equation}
where $\mathcal{P}$, $\mathcal{N}$, and $\mathcal{R}$ represent the modes with $\alpha_j\geq 0$, $\alpha_j<0$, and the (arbitrary number of) reference modes, respectively, $\vec{N}(m)$ and $\vec{N'}(m)$ are strings of occupation numbers such that $|\vec{N}(m)|=|\vec{N'}(m)|=N$ for all passes $m$. $\varphi_m$ is an arbitrary phase. 
\end{lemma}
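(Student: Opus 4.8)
The plan is to turn the \emph{global} optimality condition Eq.~(\ref{eq:finalcondsat}) into a condition on each individual pass, using the multi-pass structure of the generator together with a triangle-inequality bound on the standard deviation, and then to classify the states that survive. The first step is to rewrite the generator: since $[\hat n_i,\hat n_j]=0$ and $[\hat n_i,V]=0$ for $V=\exp(-i\hat{\vec g}\cdot\vec\theta)$, differentiating $U=\prod_{m=1}^M U^{(m)}V$ [Eq.~(\ref{eq:full_unitary})] gives $\mathcal H_i=-iU^\dagger\partial_i U=-\sum_{m=1}^M B_{m-1}^\dagger\hat n_i B_{m-1}$, where $B_{m-1}$ is the evolution through the first $m-1$ passes (every copy of $V$ contributes on an equal footing because it commutes with $\hat n_i$ and through the controls), and $B_{m-1}\ket{\psi}=\ket{\psi_m}$ is precisely the pre-pass state of the lemma. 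From Appendix~\ref{app:bnd-phase}, the optimal dual vector $\vec\beta^*$ has entries $\norm{\vec\alpha}_{1,\mathcal P}^{-1}$ on $\mathcal P$ and $0$ elsewhere, so $\vec\beta^*\cdot\hat{\vec g}=\norm{\vec\alpha}_{1,\mathcal P}^{-1}\hat n_{\mathcal P}$ with $\hat n_{\mathcal P}:=\sum_{j\in\mathcal P}\hat n_j$, and its maximal variance over $N$-photon states is $\sigma^2=N^2/(4\norm{\vec\alpha}_{1,\mathcal P}^2)$.

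Next I would contract Eq.~(\ref{eq:finalcondsat}) with $\vec\beta^*$: using $\vec\alpha\cdot\vec\beta^*=1$ and $t=M$, this states $\Var_\psi(\vec\beta^*\cdot\vec{\mathcal H})=M^2\sigma^2$. Writing $\vec\beta^*\cdot\vec{\mathcal H}$ minus its mean as $\sum_{m=1}^M X_m$, where $X_m$ is (up to sign) $B_{m-1}^\dagger(\vec\beta^*\cdot\hat{\vec g})B_{m-1}$ with its $\ket{\psi}$-mean subtracted, the variance equals $\norm{\sum_m X_m\ket{\psi}}^2$, and the triangle inequality gives $\norm{\sum_m X_m\ket{\psi}}\le\sum_m\norm{X_m\ket{\psi}}=\sum_m\sqrt{\Var_{\psi_m}(\vec\beta^*\cdot\hat{\vec g})}\le M\sigma$, the last step because each $\ket{\psi_m}=B_{m-1}\ket{\psi}$ again carries exactly $N$ photons (the controls here preserve photon number), so its variance cannot exceed $\sigma^2$. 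Saturation therefore forces $\Var_{\psi_m}(\hat n_{\mathcal P})=N^2/4$ for \emph{every} pass $m$: each $\ket{\psi_m}$ is a maximum-variance $N$-photon state of $\hat n_{\mathcal P}$.

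The third step classifies such states. On the $N$-photon sector $\hat n_{\mathcal P}$ has integer spectrum $\{0,1,\dots,N\}$, and any distribution supported there satisfies $\Var\le\langle\hat n_{\mathcal P}\rangle(N-\langle\hat n_{\mathcal P}\rangle)\le N^2/4$, with both inequalities tight iff $\hat n_{\mathcal P}$ is supported on $\{0,N\}$ with mean $N/2$. Hence $\ket{\psi_m}$ has weight exactly $\tfrac12$ on each of the eigenspaces $\hat n_{\mathcal P}=N$ and $\hat n_{\mathcal P}=0$; intersecting these with the fixed-$N$ constraint gives ``all $N$ photons in $\mathcal P$, none in $\mathcal{NR}$'' and ``all $N$ photons in $\mathcal{NR}$, none in $\mathcal P$'', i.e., the stated form $\ket{\psi_m}\propto\ket{\vec N(m)}_{\mathcal P}\ket{\vec 0}_{\mathcal{NR}}+e^{i\varphi_m}\ket{\vec 0}_{\mathcal P}\ket{\vec N'(m)}_{\mathcal{NR}}$. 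To upgrade each block from an arbitrary $N$-photon superposition to a single occupation-number string, I would invoke the remaining components of Eq.~(\ref{eq:finalcondsat})---equivalently $\cov_\psi(\mathcal H_j,\vec\beta^*\cdot\vec{\mathcal H})=M^2\sigma^2\alpha_j$ for each mode $j$, i.e., Eq.~(\ref{eq:finalcond})---together with the triangle-equality condition that the vectors $X_m\ket{\psi}$ all coincide, which forces the two branches to evolve unitarily under the inter-pass controls and fixes $\langle\hat n_j\rangle$ on each branch.

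I expect this last sharpening to be the main obstacle. The triangle-inequality argument delivers the two-branch structure with equal weights cleanly, but pinning down that each block is a single Fock string requires carefully combining the off-diagonal optimality conditions---which are \emph{global}, one set of $d$ scalar equations for the entire protocol rather than per-pass constraints---with the unitarity of the controls and the unconstrained number of reference modes in $\mathcal R$; one must check whether the per-pass Fock-string statement follows directly from these conditions or needs a further reduction beyond the block structure. A secondary technical point is the justification that every intermediate state $\ket{\psi_m}$ lies in the fixed-$N$-photon sector, which is what makes the pass-by-pass bound $\Var_{\psi_m}(\hat n_{\mathcal P})\le N^2/4$ valid.
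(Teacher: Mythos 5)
Your proposal follows essentially the same route as the paper's proof: the paper likewise reduces the optimality condition to $\sum_{i,j\in\mathcal P}\mathcal F(\vec\theta)_{ij}=(MN)^2$ (your contraction with $\vec\beta^*$ is the same statement), bounds this double sum via Cauchy--Schwarz on the covariances of $\hat P(l)=\sum_{j\in\mathcal P}\hat n_j(l)$ (equivalent to your triangle inequality on $\sum_m X_m\ket{\psi}$), and concludes that saturation forces $\mathrm{Var}(\hat P(l))=N^2/4$ at every pass, hence the equal-weight two-branch structure. The two caveats you flag---that the per-pass variance bound presupposes the intermediate states stay in the $N$-photon sector, and that the variance condition alone only pins down support on the $\hat n_{\mathcal P}\in\{0,N\}$ eigenspaces rather than single Fock strings in each branch---are both present, and left essentially implicit, in the paper's proof, which asserts the stated form is the most general maximizer and defers the coherent mapping between passes to a separate lemma.
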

The proof follows straightforwardly from an explicit calculation of the Fisher information matrix for $\hat{g}_j=\hat{n}_j$, but is somewhat algebraically tedious so we relegate it to Appendix~\ref{app:proof-of-lemma-1}.

Lemma~\ref{lemma:opt_states} suggests a particular choice of $\mathcal{T}$ from which we can pick an optimal protocol for function estimation in the $\hat{g}_j=\hat{n}_j$ case. In particular, define a set of vectors
\begin{align}\label{eq:omegadef}
    \mathcal{W} := &\left\{\vec\omega\in\mathbb{Z}^d\, \big|\, \norm{\vec\omega}_{1,\mathcal{P}}=N,\, \norm{\vec\omega}_{1,\mathcal{N}}\leq N,\, \omega_j\alpha_j\geq 0 \, \forall\, j\right\}.
\end{align}
Further, consider the restriction $\vec\omega|_\mathcal{P}\in\mathbb{Z}^d$ with components
\begin{align}\label{eq:omega_restriction}
   (\vec\omega|_\mathcal{P})_j=\begin{cases}
   \omega_j, &j \in\mathcal{P}\\
   0, &\text{otherwise},
   \end{cases}
\end{align}
and the restriction $\vec\omega|_\mathcal{N}$, defined similarly. Armed with these vectors, we can define a particular choice $\mathcal{T}$ of one-parameter families of probe states in an occupation number basis where each $\ket{\psi(\vec\omega;\varphi)}\in \mathcal{T}$ is labeled by a particular choice of $\vec\omega$ such that
\begin{align}\label{eq:omegastates}
    &\ket{\psi(\vec\omega;\varphi)}\propto \ket{\vec \omega|_\mathcal{P}}\ket{0}+e^{i\varphi}\ket{-\vec\omega|_{\mathcal{N}}}\ket{N-\norm{\vec\omega|_\mathcal{N}}_1},
\end{align}
where $\varphi\in\mathbb{R}$ is an arbitrary parameter and the last mode is a reference mode. It should be clear that these families of states are of the form specified by Lemma~\ref{lemma:opt_states}. Furthermore, note that the proportionally weighted N00N state in \cref{eq:noon} is also of this form. 

Our protocols proceed as follows: starting in a state $\ket{\psi(\vec{\omega};0)}$, after any given pass through the interferometers we use control unitaries to coherently switch between families of probe states such that the relative phase between the branches is preserved (that is, we change $\vec{\omega}$, but not $\varphi$). The fact that an optimal protocol must coherently map between such states is proven in Lemma~\ref{lemma:coherent-mapping} in Appendix~\ref{app:proof-of-lemma-1}.
We stay in the family of states $\ket{\psi(\vec\omega_n;\varphi)}$ for a fraction $p_n$ of the passes where $p_n= \frac{r_n}{M}$ for $r_n\in\{0, 1, \cdots, M\}$ such that $\sum_n p_n=1$. Here $n$ indexes some enumeration of the families of states in $\mathcal{T}$. 

The value of the component $\omega_j$ in a given probe state determines the contribution of the parameter $\theta_j$ coupled to sensor $j$ to the relative phase between the two branches of the probe state during a single pass. In particular, in a single pass with a probe state in the family $\ket{\psi(\vec\omega; \varphi)}$, the relative phase between the two branches of the probe state becomes $\vec\omega\cdot\vec\theta+\varphi$. Assuming an initial probe state with $\varphi=0$ and summing over all passes we obtain a total relative phase 
\begin{align}
   \varphi_\mathrm{tot}&= M\sum_n p_n(\vec\omega_n\cdot\vec\theta)\\
   &=: (W\vec r)\cdot\vec\theta.
\end{align}
In the second line, we implicitly defined $W$ as a matrix whose columns are the vectors $\vec\omega_n\in\mathcal{W}$ and $\vec r:=M\vec p\in\mathbb{Z}^{|\mathcal{T}|}$. Explicitly computing the Fisher information matrix for these states demonstrates that the optimality condition in Eq.~(\ref{eq:finalcond}) is satisfied if
\begin{equation}\label{eq:main_equation}
 W\vec r=NM\frac{\vec\alpha}{\norm{\vec\alpha}_{1,\mathcal{P}}};
\end{equation}
see \cref{app:protocols} for details.
Consequently, any integer solution $\vec r$ to Eq.~(\ref{eq:main_equation}) such that 
\begin{align}\label{eq:main_equation_constraint}
\norm{\vec r}_1&=M, \nonumber \\
\vec r&\geq 0,
\end{align}
yields an optimal protocol. The protocols of Ref.~\cite{proctor2017networked}, described above and generalized in \cref{app:protocols}, are a particularly simple case within this class with $M=1$ and $\vec\omega=\frac{N\vec\alpha}{\norm{\vec\alpha}_{1,\mathcal{P}}}$, i.e. we select out only a single column of $\mathcal{W}$.  
 
 Solutions to Eqs.~(\ref{eq:main_equation}) and(\ref{eq:main_equation_constraint}) are not guaranteed to exist for all $N,M$. In particular, we require that 
\begin{equation}\label{eq:diophantineNM}
    NM\frac{\vec\alpha}{\norm{\vec\alpha}_{1,\mathcal{P}}}\in\mathbb{Z}^d.
\end{equation}
For $\vec\alpha\in\mathbb{Q}$ and sufficiently large $N$ or $M$ this hold true. Setting up the system of equations in Eqs.~(\ref{eq:main_equation})-(\ref{eq:main_equation_constraint}) that must be solved to pick out explicit protocols requires identifying the set of vectors $\mathcal{W}$ defined in Eq.~(\ref{eq:omegadef}). 
While computationally straightforward, if expensive, to construct and enumerate this set, the number of states is extremely large, yielding a correspondingly large set of linear Diophantine equations in Eq.~(\ref{eq:main_equation}). Consequently, it is reasonable to place further, experimentally motivated constraints to limit this set of states and pick out advantageous protocols. For instance, one such constraint is to limit the amount of entanglement between modes on any given pass. We consider this case in the following section. 

It is also important to note that integer linear programming is NP-hard~\cite{arora2009computational}, so finding a particular solution once we add additional constraints is not a computationally easy task. Regardless, in applications one can apply standard (possibly heuristic) algorithms for integer linear programming to seek solutions. If a solution is found, it is known to be optimal. Consequently, proving the existence or lack thereof of a solution with certain additional constraints may be intractable for large problem instances. 

Similar arguments to those that go into proving Lemma~\ref{lemma:opt_states} allow us to show that, for quadrature sensing, the condition in Eq.~(\ref{eq:finalcondquad}) can be reduced to the condition that
\begin{equation}
\mathcal{F}(\vec\theta)_{ij}\sim\frac{4\overline{N}t^2}{\norm{\vec\alpha}_2^2}\alpha_i\alpha_j,
\end{equation}
which is proven in \cref{app:fim_quad}.
However, there is not a clearly interesting family of states that can be leveraged to achieve this quantum Fisher information, as in the case of number operator coupling or qubit sensors~\cite{ehrenberg2023minimum}. However, the existing optimal protocols described above do obey this condition asymptotically in average photon number $\overline{N}$.

\section{Entanglement requirements}\label{sec:entanglement}
The remaining flexibility in the choice of optimal probe states enabled by some control also allows us to impose further experimentally relevant constraints. One reasonable constraint is the amount of intermode entanglement required during the sensing process. This was considered in Ref.~\cite{ehrenberg2023minimum} for the case of qubit sensors. 

The answer to the entanglement question in the current context depends crucially on the sorts of control operations we allow. In the number operator case, with arbitrary time-dependent control, only two-mode entanglement is needed at any given time, as one can simply prepare a N00N state between the reference and one of the sensing modes and coherently switch which sensing mode is entangled with the reference mode such that the time spent entangled with mode $j$ is given by $t_j=|\alpha_j|t/\norm{\vec\alpha}_1$. For similar reasons, no entanglement is needed for displacement sensing; here, no reference mode is needed and one can simply sequentially apply displacement operators for a time $t_j=|\alpha_j|t/\norm{\vec\alpha}_1$ on a single-mode squeezed state, followed by a homodyne measurement. 
When control operations to change the probe state are allowed only at $M$ discrete time intervals, as described by Eq.~(\ref{eq:controlHnumber}), the problem becomes more interesting. For number operator coupling, subject to a fixed photon number constraint, any optimal protocol requires at least ($\lceil\norm{\vec\alpha}_0/M\rceil$+1)-mode entanglement. This bound is fairly trivial: it merely states that one must be entangled with each nontrivial mode for at least one pass. For displacement operator coupling, subject to a fixed average photon number constraint, an essentially identical argument allows us to prove that any optimal protocol requires at least $\lceil\norm{\vec\alpha}_0/M\rceil$-mode entanglement. The difference of one is because, unlike displacement sensing, phase sensing generally requires entanglement with a reference mode. In the $M\rightarrow \infty$ limit, we recover the continuous control case, so these trivial bounds can be tight. This triviality is in contrast to the qubit case, where results analogous to Lemma~\ref{lemma:opt_states} lead to significantly tighter constraints on the minimum amount of necessary entanglement for optimal protocols~\cite{ehrenberg2023minimum}. This discrepancy arises due to the fact that, unlike with photonic resources which must be distributed in a zero-sum way between modes, for qubit sensors one can be maximally sensitive to all coupled parameters simultaneously.

\begin{table*}
    \centering 
    \begin{tabular}{m{0.18\linewidth}|c|c|c}
                                 &  Qubit phase sensing & Phase sensing  & Displacement sensing \\
    \hline\hline
    Parameter coupling         & $\frac{1}{2}\hat{\sigma}_i^z\theta_i$ & $\hat{n}_i \theta_i$  & $\frac{i}{2}(\hat{a}^\dagger_i - \hat{a}_i)\theta_i$ \\ \hline
    Resources         & Qubit number, $d$ & Photon number, $N$  & Avg. photon number, $\overline{N}$\\ 
                     & sensing time, $t$ & sensing time, $t$  & sensing time, $t$\\ \hline
    MSE (separable)  & $\geq \frac{\norm{\vec\alpha}_2^2}{t^2}$~\cite{eldredge2018optimal}   &  $\geq \frac{\norm{\vec\alpha}_{2/3}^2}{N^2t^2}$~\cite{proctor2017networked}      &  $\geq\frac{\norm{\vec\alpha}_1^2}{4\overline{N}t^2}$~         \\\hline
    MSE (entangled)  & $\geq \frac{\norm{\vec\alpha}_\infty^2}{t^2}$~\cite{eldredge2018optimal} &  $\geq \frac{\norm{\vec\alpha}_{1,\mathcal{P}}^2}{N^2t^2}$~ & $\geq\frac{\norm{\vec\alpha}_2^2}{4\overline{N}t^2}$~\cite{xia2019repeater}~ \\\hline
    Entanglement needed (discrete controls) & $k \geq \max\left\{\left\lceil\frac{\norm{\vec\alpha}_1}{\norm{\vec\alpha}_\infty}\right\rceil, \left\lceil\frac{\norm{\vec\alpha}_0}{M}\right\rceil \right\}$   & $k>\left\lceil\frac{\norm{\vec\alpha}_0}{M}\right\rceil$~& $k\geq\left\lceil\frac{\norm{\vec\alpha}_0}{M}\right\rceil$~\\\hline
    Entanglement needed (arbitrary controls) & $\frac{\norm{\vec\alpha}_1}{\norm{\vec\alpha}_\infty}\in (k-1, k]$~\cite{ehrenberg2023minimum} & $k=2$~& No entanglement \\\hline
    $k$-partite {entanglement} {protocol} {always} {exists?} & Yes~\cite{ehrenberg2023minimum} & No ~ & Yes  \\
    \hline\hline
    \end{tabular}
    \caption{Comparison of the lower bounds on the mean square error and entanglement requirements for an (asymptotically) optimal protocol obeying the corresponding conditions on the quantum Fisher information for the task of estimating a linear function $q=\vec\alpha\cdot\vec\theta$ with qubit, phase sensing, and displacement sensing quantum sensor networks.}
    \label{tab:comp_qubit_photon}
\end{table*}

\section{Conclusion and outlook}\label{sec:conclusion}
 We have determined the fundamental achievable performance limits for phase sensing and have extended proofs of lower bounds for displacement sensing beyond just an average to arbitrary functions. In the process, we proved a long-standing conjecture regarding function estimation with number operator coupling~\cite{proctor2017networked} and showed that some of the protocols that exist in the literature~\cite{proctor2017networked, proctor2018multiparameter,zhuang2018distributed}, are, in fact, optimal in the asymptotic limit. By considering different implementations of a quantum sensor network within a single framework, we reveal the role of entanglement and controls as they relate to the type of coupling and whether the relevant resource is ``parallel'' (as in qubit sensor networks, where all parameters can simultaneously be measured to maximal precision) or ``sequential'' (as in photonic sensor networks, where the photons must be optimally distributed between modes). Our approach to proving our bounds also enables an algebraic framework for developing further optimal protocols, subject to various constraints. Here, we considered the particular case of entanglement-based constraints, enabling comparison to similar work in the case of qubit sensors~\cite{ehrenberg2023minimum}. 
These results, and how they fit into the landscape of known results for quantum sensor networks, are summarized in Table~\ref{tab:comp_qubit_photon}. How other constraints impact the existence of and control requirements for optimal protocols remains an interesting open question deserving of further study.

\begin{acknowledgements}
We thank Luis Pedro Garc\'ia-Pintos for helpful discussions. This work was supported in part by DARPA SAVaNT ADVENT, AFOSR MURI, AFOSR,  NSF STAQ program, DoE ASCR Accelerated Research in Quantum Computing program (award No.~DE-SC0020312),  NSF QLCI (award No.~OMA-2120757), and the DoE ASCR Quantum Testbed Pathfinder program (awards No.~DE-SC0019040 and No.~DE-SC0024220). Support is also acknowledged from the U.S.~Department of Energy, Office of Science, National Quantum Information Science Research Centers, Quantum Systems Accelerator.
\end{acknowledgements}

\onecolumngrid

\begin{appendix}

\section{Bound for Local Phase Shifts}\label{app:bnd-phase}
In this appendix, we derive lower bounds for the mean square error of measuring a linear function $q(\vec\theta)=\vec\alpha\cdot\vec\theta$ of local phase shifts, generated via coupling to the number operator $\hat{n}_j$, as specified by the Hamiltonian in Eq.~(\ref{eq:initialH}) and Eq.~(\ref{eq:particle_number_coupling}). 

In particular, we seek to solve the optimization problem in Eq.~(\ref{eq:minprob}), restated here for convenience:
\begin{align}\label{eq:minprob_app}
     \min_{\vec\beta} \max_\rho [\Delta(\vec\beta\cdot\hat{\vec{g}})_\rho]^2, \quad
    \text{subject to }\, \vec\alpha\cdot\vec\beta=1.
\end{align}
Here, $\hat{\vec{g}}=\hat{\vec{n}}=(\hat{n}_1, \hat{n}_2, \cdots, \hat{n}_d)^T$. For fixed particle number $N$, the Hilbert space on which possible probe states $\rho$ are defined is finite dimensional, and it holds that~\cite{boixo2007generalized}
\begin{equation}\label{eq:seminorm_bound}
[\Delta(\vec\beta\cdot\hat{\vec{n}})_\rho]^2\leq \frac{\norm{\vec\beta\cdot\hat{\vec{n}}}_{s,N}^2}{4},
\end{equation}
where $\norm{\vec\beta\cdot\hat{\vec{n}}}_{s, N}$ is the Fock-space-restricted seminorm of $\vec{\beta}\cdot\hat{\vec{n}}$ (defined as the difference between the maximum and minimum eigenvalues of $\vec\beta\cdot\hat{\vec{n}}$ restricted to the $N$-photon subspace). As we want to maximize the quantum Fisher information with respect to the choice of probe state $\rho$, and because \cref{eq:seminorm_bound} is saturable when $\rho$ is an equal superposition of the eigenstates of $\vec{\beta}\cdot\hat{\vec{n}}$ with maximum and minimum eigenvalues, we can consider the following optimization problem:
\begin{align}\label{eq:minprob_app2}
    \text{minimize (w.r.t. $\vec\beta$) }&\norm{\vec\beta \cdot \hat{\vec n}}_{s,N}, \nonumber\\
    \text{subject to }&\, \vec\alpha\cdot\vec\beta=1.
\end{align}
To begin, note that the largest eigenvalue of $\vec\beta\cdot\hat{\vec n}$ in the $N$-particle subspace is given by
\begin{equation}\label{eq:lambdamax}
    \lambda_\mathrm{max}(\vec\beta\cdot\hat{\vec n})=N\max\big\{\max_j \beta_j, 0\big\}=:N\beta_\mathrm{max},
\end{equation}
where we have implicitly defined $\beta_\mathrm{max}$. This largest eigenvalue corresponds to the eigenstate that consists of placing all photons in the mode corresponding to the largest positive $\beta_j$. If all $\beta_j\leq 0$, the largest eigenvalue is zero, obtained by any state with no particles in the sensor modes. Note that this requires the use of an extra mode (an ancilla or so-called ``reference mode'') to ``store'' these photons, as we fix the total photon number of our state to be $N$. 

Similarly, the smallest eigenvalue of $\vec\beta\cdot\hat{\vec n}$ in the $N$-particle subspace is given by
\begin{equation}\label{eq:lambdamin}
    \lambda_\mathrm{min}(\vec\beta\cdot\hat{\vec n})=N\min\big\{\min_j \beta_j, 0\big\}=:N\beta_\mathrm{min},
\end{equation}
where we have implicitly defined $\beta_\mathrm{min}$.

Using the facts above about the maximum and minimum eigenvalues of $\vec\beta\cdot\hat{\vec n}$ in the $N$-particle subspace we can rewrite the optimization problem in Eq.~(\ref{eq:minprob}) as
\begin{align}\label{eq:minprob2}
    \text{minimize }&N\left(\beta_\mathrm{max} - \beta_\mathrm{min} \right),\nonumber\\
    \text{subject to }&\, \vec\alpha\cdot\vec\beta=1.
\end{align}
As in the main text, define $\mathcal{P}:=\{j \,|\, \alpha_j\geq 0\}$ and $\mathcal{N}:=\{j \,|\, \alpha_j< 0\}$. We then have the following lemma.
\begin{lemma}\label{lemma:beta_signs}
The solution $\vec\beta^*$ to Eq.~(\ref{eq:minprob2}) is such that $\beta^*_j\geq 0$ for all $j\in \mathcal{P}$,  and $\beta^*_j\leq 0$ for all $j\in \mathcal{N}$. That is, $\alpha_{j}\beta_{j}^{*} \geq 0$ for all $j$. 
\end{lemma}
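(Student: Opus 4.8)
The plan is to argue by an exchange-and-rescaling argument applied directly to the reformulated program in Eq.~(\ref{eq:minprob2}), i.e.\ minimizing $\beta_\mathrm{max}-\beta_\mathrm{min}$ subject to $\vec\alpha\cdot\vec\beta=1$, where by the definitions in Eqs.~(\ref{eq:lambdamax}) and (\ref{eq:lambdamin}) one has $\beta_\mathrm{max}=\max(\{0\}\cup\{\beta_j\}_j)$ and $\beta_\mathrm{min}=\min(\{0\}\cup\{\beta_j\}_j)$. Since a mode with $\alpha_j=0$ does not enter the constraint, and setting its $\beta_j$ to zero never enlarges the interval $[\beta_\mathrm{min},\beta_\mathrm{max}]$, I would first observe that we may assume $\alpha_j\neq 0$ for all $j$ (dropping the trivial modes changes neither the constraint in the remaining variables nor the $p$-norms of $\vec\alpha$ appearing in the bounds); then $\mathcal{P}$ and $\mathcal{N}$ partition the modes and the claim becomes simply $\alpha_j\beta_j^*\geq 0$ for every $j$.

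Given any feasible $\vec\beta$, I would form $\vec\beta'$ by zeroing out exactly the wrong-sign coordinates: $\beta'_j:=\beta_j$ if $\alpha_j\beta_j\geq 0$, and $\beta'_j:=0$ if $\alpha_j\beta_j<0$. Two elementary facts then do all the work. (i) $\vec\alpha\cdot\vec\beta'=\sum_{j:\,\alpha_j\beta_j\geq 0}\alpha_j\beta_j\geq\sum_j\alpha_j\beta_j=1$, since we deleted only strictly negative summands; in particular $\vec\alpha\cdot\vec\beta'>1$ whenever $\vec\beta$ has at least one wrong-sign coordinate, and $\vec\alpha\cdot\vec\beta'=1$ otherwise. (ii) Zeroing a coordinate can only shrink the collection of positive entries and the collection of negative entries of the vector, so $\beta'_\mathrm{max}\leq\beta_\mathrm{max}$ and $\beta'_\mathrm{min}\geq\beta_\mathrm{min}$, whence $\beta'_\mathrm{max}-\beta'_\mathrm{min}\leq\beta_\mathrm{max}-\beta_\mathrm{min}$.

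Finally I would rescale: put $c:=1/(\vec\alpha\cdot\vec\beta')\in(0,1]$ and $\vec\beta'':=c\,\vec\beta'$. This vector is feasible, it satisfies $\alpha_j\beta''_j\geq 0$ for all $j$ by construction, and its objective value is $c\,(\beta'_\mathrm{max}-\beta'_\mathrm{min})\leq\beta'_\mathrm{max}-\beta'_\mathrm{min}\leq\beta_\mathrm{max}-\beta_\mathrm{min}$, so the minimum is attained at a vector with the asserted sign pattern. To conclude that \emph{every} optimizer has it, suppose $\vec\beta^*$ is optimal yet has a wrong-sign coordinate; by (i) then $c<1$ strictly, and since $\vec\alpha\cdot\vec\beta'\geq 1>0$ forces $\vec\beta'\neq 0$, some coordinate $\beta'_j\neq 0$, whence $\beta'_\mathrm{max}-\beta'_\mathrm{min}\geq|\beta'_j|>0$; therefore $\vec\beta''$ strictly improves on $\vec\beta^*$, a contradiction.

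The argument is elementary and I do not expect a genuine obstacle. The only points needing care are the monotonicity bookkeeping in step (ii) — cleanest if one keeps $\beta_\mathrm{max},\beta_\mathrm{min}$ written as the $\max/\min$ over $\{0\}\cup\{\beta_j\}$, so that deleting a coordinate is manifestly monotone — and securing the \emph{strict} improvement in the uniqueness half, which is exactly why the $\alpha_j=0$ modes are removed at the outset (so that a wrong-sign coordinate genuinely gets deleted, forcing $c<1$) and why one records $\vec\beta'\neq 0$ to get $\beta'_\mathrm{max}-\beta'_\mathrm{min}>0$.
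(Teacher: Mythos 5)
Your argument is correct and is essentially the paper's own proof: both zero out the wrong-sign coordinates, rescale by $1/(\vec\alpha\cdot\vec\beta')=1/\bigl(1-\sum_{j\in\mathcal{J}_-}\alpha_j\beta_j^*\bigr)$ to restore the constraint, and observe that the objective does not increase. Your version is somewhat more careful than the paper's on two points the paper glosses over — securing a \emph{strict} improvement for the contradiction, and the degenerate $\alpha_j=0$ modes (for which the literal claim $\beta_j^*\geq 0$ can fail for some optimizers, so your reduction to $\alpha_j\neq 0$ is the right fix) — but the underlying idea is identical.
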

\begin{proof}
We proceed by contradiction. Let $\mathcal{J}_-=\{j \, |\, \alpha_j\beta^*_j<0\}$ and $\mathcal{J}_+=\{j \, |\, \alpha_j\beta^*_j\geq 0\} $.  Suppose the solution vector $\vec\beta^*$ to Eq.~(\ref{eq:minprob2}) has $\mathcal{J}_-\neq\emptyset$. We can construct an alternative candidate solution vector $\vec\beta'$  as follows: First, let $\vec\beta'=\vec\beta^*$. Then set $\beta_j'=0$ for all $j\in\mathcal{J}_-$. 
In order to still satisfy the constraint $\vec{\alpha}\cdot\vec{\beta'}=1$, we must reduce the values of some other components in $\vec\beta'$. In particular, it is simple to calculate that a valid solution is, for $j \in \mathcal{J}_{+}$,
\begin{equation}
    \beta_{j}' =  \frac{\beta_{j}^{*}}{\sum_{j\in\mathcal{J}_{+}}\alpha_{j}\beta_{j}^{*}} = \frac{\beta_{j}^{*}}{1-\sum_{j\in\mathcal{J}_{-}}\alpha_{j}\beta_{j}^{*}}.
\end{equation}
Again, when $j \in \mathcal{J}_{-}$, $\beta'_{j} = 0$.

Let $\beta_\mathrm{max}':=\max\big\{\max_j \beta'_j, 0\big\}$ and $\beta_\mathrm{min}':=\max\big\{\min_j \beta'_j, 0\big\}$. By construction, $\beta_\mathrm{max}'\leq\beta_\mathrm{max}^{*}$ and $0=\beta_\mathrm{min}'\geq\beta_\mathrm{min}^{*}$. Consequently, $\vec\beta'$ yields a smaller solution candidate than $\vec\beta^*$. This contradicts the fact that $\vec\beta^*$ is the optimal solution. The lemma statement follows as an immediate consequence.
\end{proof}

Lemma~\ref{lemma:beta_signs} allows us to rewrite the minimization problem in Eq.~(\ref{eq:minprob2}) once again as
\begin{align}\label{eq:minprob3}
    \text{minimize }&N\left[\max_{j \in \mathcal{P}} \beta_j - \min_{j \in \mathcal{N}} \beta_j \right],\nonumber\\
    \text{where }&\, \beta_j \geq 0 \,\,\forall\, j \in \mathcal{P}, \nonumber\\
    &\beta_j \leq 0 \,\,\forall\, j \in \mathcal{N},\nonumber \\
    \text{subject to }&\, \vec\alpha\cdot\vec\beta=1.
\end{align}
In the above, we define $\max_{j \in \mathcal{P}}\beta_j$ ($\min_{j \in \mathcal{N}} \beta_j$) to be zero if $\mathcal{P}=\emptyset$ ($\mathcal{N}=\emptyset$). A further simplification is enabled by another lemma.
\begin{lemma}\label{lemma:beta_equality}
The solution vector $\vec\beta^*$ to Eq.~(\ref{eq:minprob3}) is such that $\beta^*_j=\beta^*_\mathrm{max}$ for all $j\in\mathcal{P}$ and $\beta^*_j=\beta^*_\mathrm{min}$ for all $j\in\mathcal{N}$.
\end{lemma}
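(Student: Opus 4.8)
The plan is to collapse problem~(\ref{eq:minprob3}) to an optimization over just the two extremal values of $\vec\beta$ and then show that any feasible vector which fails to be ``flat'' on $\mathcal{P}$ and on $\mathcal{N}$ admits a strictly better competitor. Write $b_P := \beta_{\max} = \max_{j\in\mathcal{P}}\beta_j \geq 0$ and $b_N := -\beta_{\min} = -\min_{j\in\mathcal{N}}\beta_j \geq 0$. By Lemma~\ref{lemma:beta_signs} all component signs are pinned ($\alpha_j\beta_j\geq 0$), so the objective is exactly $N(b_P+b_N)$ and depends on $\vec\beta$ only through the pair $(b_P,b_N)$; moreover $0\leq\beta_j\leq b_P$ for $j\in\mathcal{P}$ and $0\leq-\beta_j\leq b_N$ for $j\in\mathcal{N}$. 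The key estimate comes from the equality constraint:
\begin{equation}
1 = \vec\alpha\cdot\vec\beta = \sum_{j\in\mathcal{P}}|\alpha_j|\,\beta_j + \sum_{j\in\mathcal{N}}|\alpha_j|\,(-\beta_j) \leq b_P\norm{\vec\alpha}_{1,\mathcal{P}} + b_N\norm{\vec\alpha}_{1,\mathcal{N}}\,,
\end{equation}
with equality if and only if $\beta_j=b_P$ for every $j\in\mathcal{P}$ with $\alpha_j\neq 0$ and $\beta_j=-b_N$ for every $j\in\mathcal{N}$ --- which is precisely the claimed form.

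Next I would argue by contradiction. Suppose an optimal $\vec\beta^*$ does not have this form; then the displayed inequality is strict, i.e.\ $c := b_P\norm{\vec\alpha}_{1,\mathcal{P}} + b_N\norm{\vec\alpha}_{1,\mathcal{N}} > 1$. Define the ``flattened'' vector $\vec\beta'$ by $\beta_j' = b_P$ on $\mathcal{P}$ and $\beta_j' = -b_N$ on $\mathcal{N}$, so that $\vec\alpha\cdot\vec\beta' = c$, and set $\vec\beta'' := \vec\beta'/c$. Then $\vec\beta''$ is feasible for~(\ref{eq:minprob3}) (it preserves the sign constraints and satisfies $\vec\alpha\cdot\vec\beta'' = 1$), its extremal values are $b_P/c$ and $-b_N/c$, and feasibility of $\vec\beta^*$ forces $b_P+b_N>0$ (otherwise all components of $\vec\beta^*$ vanish and $\vec\alpha\cdot\vec\beta^*=0\neq 1$). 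Hence the objective of $\vec\beta''$ is $N(b_P+b_N)/c < N(b_P+b_N)$, contradicting optimality of $\vec\beta^*$. Therefore the optimal vector is flat on $\mathcal{P}$ and on $\mathcal{N}$, establishing $\beta^*_j=\beta^*_{\max}$ on $\mathcal{P}$ and $\beta^*_j=\beta^*_{\min}$ on $\mathcal{N}$.

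The step I expect to require the most care is handling the fact that~(\ref{eq:minprob3}) need not have a \emph{unique} optimizer (for instance when $\norm{\vec\alpha}_{1,\mathcal{P}} = \norm{\vec\alpha}_{1,\mathcal{N}}$ there is a one-parameter family of optima), which would make a naive local-perturbation argument awkward; the global ``flatten and rescale'' construction above sidesteps this by exhibiting a strictly cheaper competitor outright whenever flatness fails. The only remaining bookkeeping subtlety is components with $\alpha_j=0$: such modes are decoupled from $q$, so one may either drop them at the outset or simply set the corresponding $\beta^*_j$ equal to $\beta^*_{\max}$, which changes neither feasibility nor the objective; I would dispatch this in a sentence rather than belabor it.
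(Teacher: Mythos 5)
Your proof is correct and follows the same basic strategy as the paper's --- a contradiction argument that exhibits an explicitly better feasible competitor whenever $\vec\beta^*$ fails to be flat on $\mathcal{P}$ and on $\mathcal{N}$ --- but the competitor is constructed differently. The paper replaces the components on $\mathcal{P}$ (resp.\ $\mathcal{N}$) by their $\alpha$-weighted average $\sum_{l\in\mathcal{P}}\alpha_l\beta_l^*/\sum_{l\in\mathcal{P}}\alpha_l$, which preserves the constraint $\vec\alpha\cdot\vec\beta=1$ exactly and lowers the maximum because a weighted average of non-identical values lies below the largest of them. You instead derive the bound $1=\vec\alpha\cdot\vec\beta\leq b_P\norm{\vec\alpha}_{1,\mathcal{P}}+b_N\norm{\vec\alpha}_{1,\mathcal{N}}=:c$, flatten to the extremal values, and rescale by $1/c$ to restore feasibility, with the strict decrease coming from $c>1$. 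Both constructions are valid; yours has the incidental advantages that it already produces the two-variable constraint $b_P\norm{\vec\alpha}_{1,\mathcal{P}}+b_N\norm{\vec\alpha}_{1,\mathcal{N}}\geq 1$ that the appendix subsequently optimizes, and that you explicitly flag the $\alpha_j=0$ components in $\mathcal{P}$. That edge case is worth flagging: for such components the lemma's conclusion holds only up to a harmless non-uniqueness (a mode with $\alpha_j=0$ can carry any $\beta_j\in[0,\beta_{\max}]$ without affecting feasibility or the objective), and it is precisely the case in which the paper's ``weighted average is strictly less than the maximum'' step can fail, since zero-weight components do not pull the average down. Your one-sentence dispatch of this point is adequate and slightly more careful than the published argument.
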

\begin{proof}
We proceed by contradiction. Suppose the solution vector $\vec\beta^*$ is such that $\beta_i^*\neq\beta^*_j$ for some $i,j \in \mathcal{P}$. Then we could consider an alternative candidate solution vector $\vec\beta'$ where $\beta_k'= \frac{\sum_{l \in \mathcal{P}} \alpha_l \beta_l^*}{\sum_{l \in \mathcal{P}} \alpha_l}$ for all $k\in \mathcal{P}$. Similarly, if $\beta_i^*\neq\beta^*_j$ for some $i,j \in \mathcal{N}$ we could consider $\beta_k'= \frac{\sum_{l \in \mathcal{N}} \alpha_l \beta_l^*}{\sum_{l \in \mathcal{N}} \alpha_l}$ for all $k\in \mathcal{N}$. Clearly, $\vec\beta'$ still satisfies the constraint
\begin{align}
    \vec\alpha\cdot\vec\beta' = \sum_{m\in \mathcal{P}}\alpha_m  \left(\frac{\sum_{l \in \mathcal{P}} \alpha_l \beta_l^*}{\sum_{l \in \mathcal{P}} \alpha_l}\right)+\sum_{m\in \mathcal{N}}\alpha_m  \left(\frac{\sum_{l \in \mathcal{N}} \alpha_l \beta_l^*}{\sum_{l \in \mathcal{N}} \alpha_l}\right)=\vec\alpha\cdot\vec\beta^*=1.
\end{align}
Additionally, $\beta'$ also clearly still has $\beta'_{j}\geq0$ when $j \in \mathcal{P}$ and $\beta'_{j}\leq 0$ when $j \in \mathcal{N}$. 
But, by construction (because the weighted average of a set is less than its maximum element), 
\begin{equation}
    N\left[\max_{j \in \mathcal{P}} \beta_j' - \min_{j \in \mathcal{N}} \beta_j' \right]<N\left[\max_{j \in \mathcal{P}} \beta_j^* - \min_{j \in \mathcal{N}} \beta_j^* \right].
\end{equation}
So $\vec\beta^*$ is not the solution vector and we have arrived at a contradiction.
\end{proof}

As a direct consequence of Lemma~\ref{lemma:beta_equality} we can rewrite the optimization problem in  Eq.~(\ref{eq:minprob3}) one last time as
\begin{align}
    \text{minimize (w.r.t. $\beta_\mathrm{min}, \beta_\mathrm{max}$) }&N \left[\beta_\mathrm{max} - \beta_\mathrm{min}\right],\\
    \text{subject to }&\,\beta_\mathrm{max} \geq 0, \beta_\mathrm{min} \leq 0,\\
    &\, \beta_\mathrm{max} \sum_{j \in \mathcal{P}} \alpha_j + \beta_\mathrm{min} \sum_{j \in \mathcal{N}} \alpha_j = 1.
\end{align}

Because this is a linear objective function, the optimal solution will be one of the two boundary solutions: $\beta_\mathrm{max} = \frac{1}{\sum_{i \in \mathcal{P}} \alpha_i}, \beta_\mathrm{min} = 0 $  or $\beta_\mathrm{min} = \frac{1}{\sum_{i \in \mathcal{N}} \alpha_i}, \beta_\mathrm{max} = 0 $. Minimizing over these two candidate solutions, we obtain the final result
\begin{equation}
 \norm{\hat g_q}_{s, N}^2=\frac{N^2}{\max(\sum_{i \in \mathcal{P}} \alpha_i, \sum_{i \in \mathcal{N}} \alpha_i)^2}.
\end{equation}
Consequently, via the quantum Cram\'{e}r-Rao bound, \cref{eq:bnd_betastar},
\begin{align}
\mathcal{M}&\geq \frac{\max\left\{\sum_{i \in \mathcal{P}} \alpha_i, \sum_{i \in \mathcal{N}} \alpha_i\right\}^2}{N^2t^2} \nonumber \\
    &=:  \frac{\max\left\{\norm{\vec\alpha}_{1,\mathcal{P}}^2, \norm{\vec\alpha}_{1,\mathcal{N}}^2\right\}}{N^2t^2},
\end{align}
which is Eq.~(\ref{eq:finalbnd}), and where $||\vec{\alpha}||_{1,\mathcal{P}}$ and $||\vec{\alpha}||_{1,\mathcal{N}}$ are the one-norm restricted to positive and negative values, respectively, of $\vec{\alpha}$.
In the special case of all positive coefficients (i.e., $\mathcal{N}=\emptyset$), this reduces to
\begin{equation}
    \mathcal{M}\geq \frac{\norm{\vec\alpha}_1^2}{N^2t^2},
\end{equation}
which, as described in the main text, proves a conjecture from Ref.~\cite{proctor2017networked} that this is the minimum attainable variance for $\vec\alpha\in\mathbb{Q}^d$ with $\vec\alpha\geq 0$.

\section{Bound for Local Displacements}\label{app:bnd-displacement}
In this appendix, we derive Eq.~(\ref{eq:dispbnd}) for the mean square error attainable for measuring a linear function of local displacements, restricting to probe states with fixed average photon number $\overline{N}$.

\subsection{Separable Bound}
To begin, it is helpful to present the bound for the more restricted case where we use separable input states. Begin by considering the lower bound on the variance of measuring a displacement $\varphi$ coupled to a single mode via $H=\varphi \hat{p}$, following the proof sketched in Ref.~\cite{zhang2021distributed}. The quantum Fisher information is given by
\begin{equation}
    \mathcal{F}(\varphi)=4[\Delta(\hat p)_\rho]^2,
\end{equation}
where $\rho$ is the probe state, which is restricted to have an average photon number $\overline{N}$. An initial displacement does not enhance precision~\cite{zhang2021distributed}, so we can consider zero-mean displacement input states. For such probe states,
\begin{align}
(\Delta\hat p)^2&=-\frac{1}{4}\langle (\hat a^\dagger-\hat a)^2\rangle=-\frac{1}{4}(\langle \hat a^\dagger \hat a^\dagger\rangle -\langle \hat a^\dagger \hat a\rangle -\langle \hat a \hat a^\dagger \rangle +\langle \hat a \hat a\rangle),\\
(\Delta\hat x)^2&=\frac{1}{4}\langle (\hat a^\dagger+\hat a)^2\rangle=\frac{1}{4}(\langle \hat a^\dagger \hat a^\dagger\rangle +\langle \hat a^\dagger \hat a\rangle +\langle \hat a \hat a^\dagger \rangle +\langle \hat a \hat a\rangle),
\end{align}
so that
\begin{align}\label{eq:numberconstraint}
    \overline{N}=\langle \hat a^\dagger \hat a\rangle = (\Delta\hat p)^2+(\Delta\hat x)^2-\frac{1}{2},
\end{align}
where we used that $\hat a \hat a^\dagger=\hat a^\dagger \hat a+1$.  We can then use the uncertainty principle
\begin{equation}
    (\Delta\hat p)^2(\Delta\hat x)^2\geq \frac{1}{16},
\end{equation}
which follows from our definition of the quadrature operators as $\hat x=(\hat a^\dagger+\hat a)/2$ and $\hat p = i(\hat a^\dagger -\hat a)/2$. Therefore, 
\begin{align}
    \xi\left(\overline{N}-\xi+\frac{1}{2}\right)\geq \frac{1}{16},
\end{align}
where we let $\xi:=(\Delta\hat p)^2$. Then
\begin{align}
    -16\xi^2 +(16\overline{N}+8)\xi-1\geq 0.
\end{align}
To maximize $\xi$, this inequality must be saturated, so we can solve the corresponding quadratic to obtain the solution
\begin{equation}
\xi=\frac{-8(2\overline{N}+1)+\sqrt{64(2\overline{N}+1)^2-64}}{-32}\implies 4\xi=(\sqrt{\overline{N}}+\sqrt{\overline{N}+1})^2\sim 4\overline{N}.
\end{equation}
It is worth noting that the $\mathcal{O}(\overline{N})$ asymptotic behavior of the maximum variance of $\hat p$ could have been obtained with no calculation just from examining the constraint in Eq.~(\ref{eq:numberconstraint}) under the assumption that $(\Delta\hat{x})^2$ can be made negligibly small.

Putting everything back together, we have found that, optimizing over states with fixed average photon number $\overline{N}$, the following holds:
\begin{equation}\label{eq:single_mode_displacement_avg_photon}
[\Delta(\tilde\varphi)]^2\geq \frac{1}{\mathcal{F}}\geq \frac{1}{t^2(\sqrt{\overline{N}}+\sqrt{\overline{N}+1})^2}=\frac{1}{4t^2\overline{N}}+\mathcal{O}\bigg(\frac{1}{t^2\overline{N}^{2}}\bigg).
\end{equation}

Working in the asymptotic in $\overline{N}$ limit, we can use Eq.~(\ref{eq:single_mode_displacement_avg_photon}) to obtain a bound on performance for estimating a linear function $q(\vec\theta)=\vec\alpha\cdot\vec\theta$ with an unentangled protocol as 
\begin{equation}\label{eq:opt_avg_photon_unent}
(\Delta\tilde q)^2\geq \frac{1}{t^2}\min_{\{\overline{N}_j\}} \sum_{j=1}^d\frac{|\alpha_j|^2}{4\overline{N}_j}+\mathcal{O}\left(\frac{1}{\overline{N}_j^{2}}\right),
\end{equation}
where $\overline{N}_j=\langle \hat a^\dagger_j \hat a_j\rangle$ is the average number of photons used in mode $j$ and $\sum_j \overline{N}_j=\overline{N}$. Assume without loss of generality that $|\alpha_j|>0$ for all $j$ (that is, no $\alpha_{j} = 0$) and independent of $\overline{N}$. Then we can optimize (at leading order in $\frac{1}{\overline{N}}$) the distribution of photons amongst the modes using the Lagrangian
\begin{equation}
\mathcal{L}=\sum_{j=1}^d \frac{|\alpha_j|^2}{4\overline{N}_j} + \gamma\left(\sum_{j=1}^d \overline{N}_j-\overline{N}\right),
\end{equation}
where $\gamma$ is a Lagrange multiplier. A bit of algebra yields that
\begin{align}
\frac{\partial\mathcal{L}}{\partial \overline{N}_j}=0\implies \overline{N}_j = \frac{|\alpha_j|}{2\sqrt{\gamma}}.
\end{align}
This further implies that
\begin{equation}
\overline{N}=\sum_{j=1}^d \overline{N}_j = \frac{\norm{\vec\alpha}_1}{2\sqrt{\gamma}},
\end{equation}
allowing us to obtain the optimal division of photons as
\begin{equation}
\overline{N}_j=\frac{|\alpha_j|}{\norm{\vec\alpha}_1}\overline{N}.
\end{equation}
We note that this solution is clearly the desired minimum of the Lagrangian, as maximizing the objective would lead to setting any $\overline{N}_{j}$ to 0. 
Plugging this back into Eq.~(\ref{eq:opt_avg_photon_unent}) we obtain the (asymptotic in $\overline{N}$) separable bound
\begin{equation}
[\Delta\tilde q]^2\geq \frac{\norm{\vec\alpha}_1^2}{4\overline{N}t^2}+\mathcal{O}\left(\frac{1}{\overline{N}^2}\right).
\end{equation}

This bound can be achieved by using the single-mode protocols in Ref.~\cite{zhang2021distributed} for each mode and then computing the function of interest classically as a linear combination of the individual estimators.

\subsection{General Function Estimation Bound}
In this subsection, we turn to our primary task: deriving Eq.~(\ref{eq:dispbnd}) for the mean square error attainable for measuring a linear function of local displacements, restricting to probe states with fixed average photon number $\overline{N}$. 

To derive this bound, we must solve the optimization problem in Eq.~(\ref{eq:minprob}) for $\hat{g}_j=\hat{p}_j$:
\begin{align}\label{eq:minprob_app2}
    \min_{\vec\beta} \max_{\rho} [\Delta(\vec\beta\cdot\hat{\vec{p}})_\rho]^2, \quad
    \text{subject to }\, \vec\alpha\cdot\vec\beta=1.
\end{align}
We can write
\begin{align}
[\Delta(\vec\beta\cdot\hat{\vec{p}})]^2&=\sum_{i,j =1}^{d}\beta_i\beta_j\mathrm{Cov}(\hat p_i, \hat p_j)\nonumber \\
&\leq \sum_{i,j=1}^{d}\beta_i\beta_j\sqrt{(\Delta\hat{p}_i)^2(\Delta\hat{p}_j})^2\nonumber \\
&= \left[\sum_{j=1}^{d}\beta_j\Delta\hat{p}_j\right]^2\nonumber \\
&\leq \norm{\vec\beta}_2^2\sum_{j=1}^{d} (\Delta\hat{p}_j)^2,
\end{align}
where we applied the Cauchy-Schwarz inequality twice. Using the same assumption of zero-displacement states we made in the previous section, we can further bound $\sum_j (\Delta\hat{p}_j)^2$ using the constraint on average photon number 
\begin{equation}
\sum_{j=1}^{d} \left[(\Delta\hat{p}_j)^2+(\Delta\hat{x}_j)^2\right] - \frac{d}{2}=\sum_{j=1}^{d} \langle a^\dagger_j a_j\rangle =\overline{N},
\end{equation}
implying that
\begin{equation}\label{eq:a}
\sum_{j=1}^{d} (\Delta\hat{p}_j)^2 \leq \overline{N}+\frac{d}{2}.
\end{equation}
Equation~(\ref{eq:a}) is tight when $(\Delta \hat x_j)^2=0$ for all $j$. This is, of course, impossible to achieve, but can be approached asymptotically with increasing $\overline{N}$ ($\overline{N}\gg d$). Furthermore, using the fact that $\vec\alpha$ is dual to $\vec\beta$ and the Cauchy-Schwarz inequality, it holds that
\begin{equation}
1=\vec\alpha\cdot\vec\beta\leq \norm{\vec\beta}_2 \norm{\vec\alpha}_2.
\end{equation}
As we want to minimize with respect to $\vec\beta$, we consider the case where this inequality is saturated (i.e. $\vec\beta^*=\frac{\vec\alpha}{\norm{\vec\alpha}_{2}^{2}}$). Therefore, $\norm{\vec\beta^*}_2=\frac{1}{\norm{\vec\alpha}_2}$, and we obtain
\begin{equation}
[\Delta(\vec\beta\cdot\hat{\vec{p}})]^2\leq \frac{\overline{N}}{\norm{\vec\alpha}_2^2}+\mathcal{O}\left(\frac{d}{\norm{\vec\alpha}_2^2}\right).
\end{equation}
This yields the final bound
\begin{equation}
\mathcal{M}\geq \frac{\norm{\vec\alpha}_2^2}{4\overline{N}t^2}-\mathcal{O}\left(\frac{d\norm{\vec\alpha}_2^2}{\overline{N}^2t^2}\right).
\end{equation}
From the derivation alone, it is not obvious that this bound can be saturated, but the existence of protocols that achieve it~\cite{zhuang2018distributed} indicate that this bound is, indeed, tight asymptotically in $\overline{N}$.

\section{Quantum Fisher Information Matrix Elements}
In this appendix, we derive the matrix elements of the quantum Fisher information matrix for generators $\hat{n}_{j}$ and $\hat{p}_{j}$ under the unitary evolution \cref{eq:full_unitary}. 
For number operator coupling $\hat{g}_j=\hat{n}_j$,
\begin{align}\label{eq:generator}
\mathcal{H}_j=-iU^\dagger\partial_j U&=-\sum_{m=1}^M\left(\prod_{l=1}^{m-1}U^{(l)} V\right)^\dagger \hat{n}_j \left(\prod_{l=1}^{m-1}U^{(l)} V\right)\nonumber \\
&=:-\sum_{m=1}^M \hat{n}_j(m),
\end{align}
where in the second line we implicitly defined $\hat{n}_j(m)$.
Consequently, we can compute the quantum Fisher information matrix elements via Eq.~(\ref{eq:fim}) to be
\begin{align}\label{eq:fim_general}
\mathcal{F}(\vec\theta)_{ij}=4\Bigg[&\sum_{l=1}^M\sum_{m=1}^M\frac{1}{2}\langle\{\hat{n}_i(l),\hat{n}_j(m)\}\rangle-\left(\sum_{m=1}^M\langle\hat{n}_i(m)\rangle\right)\left(\sum_{m=1}^M\langle\hat{n}_j(m)\rangle\right)\Bigg].
\end{align}

When $\hat{U}^{(j)}=I$ for all $j$, this reduces to
\begin{align}\label{eq:fim_nohc}
    \mathcal{F}(\vec\theta)_{ij}&=4M^2\left[\langle\hat{n}_i\hat{n}_j\rangle-\langle\hat{n}_i\rangle\langle\hat{n}_j\rangle\right].
\end{align}

For quadrature operator coupling $\hat{g}_j=\hat{p}_j$, essentially identical manipulations yield
\begin{align}\label{eq:fim_general_quad}
\mathcal{F}(\vec\theta)_{ij}=4\Bigg[&\sum_{l=1}^M\sum_{m=1}^M\frac{1}{2}\langle\{\hat{p}_i(l),\hat{p}_j(m)\}\rangle-\left(\sum_{m=1}^M\langle\hat{p}_i(m)\rangle\right)\left(\sum_{m=1}^M\langle\hat{p}_j(m)\rangle\right)\Bigg],
\end{align}
where $\hat{p}_j(l)$ is defined as in Eq.~(\ref{eq:generator}) with $\hat{n}_j\rightarrow\hat{p}_j$.

\section{Protocols for Local Phase Shifts}\label{app:protocols}
In this appendix, we elaborate on the families of optimal protocols for measuring a linear function of phase shifts that we described in Sec.~IV. 

\subsection{An Optimal Protocol for Functions with Positive Coefficients}
We begin by reviewing a protocol from Ref.~\cite{proctor2017networked} for the special case of a linear function with positive coefficients (i.e., $\vec\alpha\geq 0$). Our results in \cref{app:bnd-phase} show that, as those authors conjectured, this protocol is optimal. In particular, consider using as the probe state a so-called proportionally weighted N00N state over $d+1$ modes:
\begin{align}
    \ket{\psi}\propto \bigg|N\frac{\alpha_1}{\norm{\vec\alpha}_1}, \cdots, N\frac{\alpha_d}{\norm{\vec\alpha}_1}, 0\bigg \rangle+\bigg|0,\cdots, 0,N\bigg\rangle,
\end{align}
where we have expressed the state in an occupation number basis over $d+1$ modes and have dropped the normalization for concision. The last mode serves as a reference mode. Observe that, for this state to be well defined, it is essential that $\frac{\vec\alpha}{\norm{\vec\alpha}_1}\in\mathbb{Q}^d$ and that $N$ is such that the resulting occupation numbers are integers, which may require that $N$ be large. 

Following imprinting of the parameters $\vec\theta$ onto the probe state via $M$ passes through the interferometers, one obtains
\begin{align}
    &\ket{\psi_M}=e^{-iM\hat{\vec n}\cdot\vec\theta}\ket{\psi}\propto \bigg|N\frac{\alpha_1}{\norm{\vec\alpha}_1}, \cdots, N\frac{\alpha_d}{\norm{\vec\alpha}_1}, 0\bigg \rangle+e^{i\vec\alpha\cdot\vec\theta \frac{NM}{\norm{\vec\alpha}_1}}\bigg|0,\cdots, 0,N\bigg \rangle.
\end{align}
This process allows us to saturate the bound in Eq.~(\ref{eq:finalbnd_positivealpha}). In particular, using Eq.~(\ref{eq:fim_general}) [which reduces to \cref{eq:fim_nohc} because there is no control required], it is straightforward to calculate that the quantum Fisher information matrix for the probe state is
\begin{equation}
    \mathcal{F}(\vec\theta)=\frac{(MN)^2}{\norm{\vec\alpha}_1^2}\vec\alpha\vec\alpha^T,
\end{equation}
which clearly satisfies the condition in Eq.~(\ref{eq:finalcond}) (recalling that $||\vec{\alpha}||_{1} = ||\vec{\alpha}||_{1, \mathcal{P}}$ here because we have assumed all coefficients are non-negative, and also recalling that $\Delta t = 1$ such that $M = t$). 

While the conditions on the quantum Fisher information matrix for an optimal protocol are met, a full protocol requires a description of the measurements used to extract the quantity of interest from the relative phase between the branches of $\ket{\psi_M}$. As described in the main text, this can be done via the robust phase estimation protocols of Refs.~\cite{kimmel2015robust,kimmel2015robusterratum,belliardo2020achieving} with a small multiplicative constant overhead relative to the quantum Cram\'{e}r-Rao bound (we also briefly discuss the idea behind robust phase estimation in Appendix~\ref{app:robust-phase-estimation}). The details of implementing the necessary parity measurements for N00N-like states are discussed in detail in Appendix A of Ref.~\cite{belliardo2020achieving} and Ref.~\cite{chiruvelli2011parity}.

\subsection{Extending the Optimal Protocol to Negative Coefficients}\label{s:optimalprotocols_neg}
While not explicitly considered in Ref.~\cite{proctor2017networked}, it is straightforward to extend the above protocol to the situation where $\mathcal{N}\neq \emptyset$, which we do here. Without loss of generality, assume the coefficients are ordered so that $\alpha_1\geq \alpha_2\geq \cdots \geq \alpha_d$. Using our standard assumption that $\norm{\vec\alpha}_{1,\mathcal{P}}\geq \norm{\vec\alpha}_{1,\mathcal{N}}$, we claim that the following probe state is optimal:
\begin{align}\label{eq:probe_state_neg_alpha}
    \ket{\psi}&\propto\bigotimes_{j\in\mathcal{P}}\ket{N\frac{\alpha_j}{\norm{\vec\alpha}_{1,\mathcal{P}}}}\ket{0}^{\otimes |\mathcal{N}|}\ket{0}+\ket{0}^{\otimes|\mathcal{P}|}\bigotimes_{j\in\mathcal{N}}\ket{N\frac{|\alpha_j|}{\norm{\vec\alpha}_{1,\mathcal{P}}}}\ket{N-N\frac{\norm{\vec\alpha}_{1,\mathcal{N}}}{\norm{\vec\alpha}_{1,\mathcal{P}}}},
\end{align}
where, again, the last mode is a reference mode, and we have dropped the normalization of the state. 
Interestingly, observe that, if $\norm{\vec\alpha}_{1,\mathcal{P}}= \norm{\vec\alpha}_{1,\mathcal{N}}$, the reference mode factors out and is unnecessary. Similar to the $\vec\alpha\geq 0$ case, for this state to be well defined, we require that $N|\alpha_j|/\norm{\vec\alpha}_{1,\mathcal{P}}\in\mathbb{N}$ for all $j$, which is always true for some sufficiently large $N$ provided $\vec\alpha\in\mathbb{Q}^d$.

Consider applying the encoding unitary for $M$ passes through the interferometers. For $\norm{\vec\alpha}_{1,\mathcal{P}}\geq \norm{\vec\alpha}_{1,\mathcal{N}}$, this yields
\begin{align}\label{eq:final_state_negative_alpha}
    &\ket{\psi_M}\propto\bigotimes_{j\in\mathcal{P}}\ket{N\frac{\alpha_j}{\norm{\vec\alpha}_{1,\mathcal{P}}}}\ket{0}^{\otimes |\mathcal{N}|}\ket{0}+e^{i\vec\alpha\cdot\vec\theta \frac{NM}{\norm{\vec\alpha}_{1,\mathcal{P}}}}\ket{0}^{\otimes|\mathcal{P}|}\bigotimes_{j\in\mathcal{N}}\ket{N\frac{|\alpha_j|}{\norm{\vec\alpha}_{1,\mathcal{P}}}}\ket{N-N\frac{\norm{\vec\alpha}_{1,\mathcal{N}}}{\norm{\vec\alpha}_{1,\mathcal{P}}}}.
\end{align}
This probe state is optimal in the sense of satisfying the Fisher information condition in Eq.~(\ref{eq:finalcond}). In the main text, we described an even more general family of protocols. Within this more general framework, we will prove this optimality. 

\subsection{A Family of Optimal Protocols}\label{s:protocol_family}
Finally, we describe a family of optimal protocols that satisfy the conditions on the quantum Fisher information matrix given in \cref{eq:finalcond}.
In the main text, we defined a family of optimal protocols in terms of vectors from the set
\begin{align}
    \mathcal{W}:=\left\{\vec\omega\in\mathbb{Z}^d\, \big|\, \norm{\vec\omega}_{1,\mathcal{P}}=N,\, \norm{\vec\omega}_{1,\mathcal{N}}\leq N,\, \omega_j\alpha_j\geq 0 \, \forall\, j\right\}.
\end{align}
In particular, from these vectors, we defined a set $\mathcal{T}$ of one-parameter families of probe states in an occupation number basis where each $\ket{\psi(\vec\omega;\varphi)}\in \mathcal{T}$ is labeled by a particular choice of $\vec\omega$ such that
\begin{align}\label{eq:omegastates}
    &\ket{\psi(\vec\omega;\varphi)}\propto \ket{\vec \omega|_\mathcal{P}}\ket{0}+e^{i\varphi}\ket{-\vec\omega|_{\mathcal{N}}}\ket{N-\norm{\vec\omega|_\mathcal{N}}_1},
\end{align}
where $\varphi\in\mathbb{R}$ is an arbitrary parameter and the last mode is a reference mode. Recall also that $\vec{\omega}_{\mathcal{P}}$ and $\vec{\omega}_{\mathcal{N}}$ are defined in \cref{eq:omega_restriction} as the restriction of $\vec{\omega}$ to $j \in \mathcal{P}$ and $\mathcal{N}$, respectively (for $j$ not in the correct set, the value is set to $0$). Note that such states are of the form of those in Lemma~\ref{lemma:opt_states}. We claimed that, by explicitly computing the Fisher information matrix for these states, one could demonstrate that the optimality condition in Eq.~(\ref{eq:finalcond}) is satisfied for a protocol such that
\begin{equation}\label{eq:main_equation-2}
 W\vec r=NM\frac{\vec\alpha}{\norm{\vec\alpha}_{1,\mathcal{P}}},
\end{equation}
where $\vec r\in\mathbb{Z}^{|\mathcal{T}|}$ is as defined in the main text and must obey the conditions
\begin{align}\label{eq:main_equation_constraint-2}
\norm{\vec r}_1&=M, \nonumber \\
\vec r&\geq 0.
\end{align}
Recall that $W$ is a matrix whose columns are the vectors $\vec{\omega}_n\in\mathcal{W}$.

Here we explicitly demonstrate this. We can easily evaluate
\begin{align}
\langle\hat{n}_j(m)\rangle=\bra{\psi(\vec\omega^{(m)};\varphi)}\hat n_j \ket{\psi(\vec\omega^{(m)};\varphi)}=\frac{|\omega^{(m)}_j|}{2}
\end{align}
and
\begin{align}
\langle\hat{n}_i(l)\hat{n}_j(m)\rangle&=\bra{\psi(\vec\omega^{(l)};\varphi)}\hat{n}_iU(m\leftrightarrow l)\hat n_j \ket{\psi(\vec\omega^{(m)};\varphi)}\nonumber \\
&=\frac{|\omega^{(l)}_i\omega^{(m)}_j|}{2}\bra{\psi_l(\vec\omega^{(l)};\varphi)}U(m\leftrightarrow l)\ket{\psi_m(\vec\omega^{(m)};\varphi)},
\end{align}
where $\hat{n}_j(m)$ are defined as in Eq.~(\ref{eq:generator}), and
\begin{equation}
U(m\leftrightarrow l)=\begin{cases}
\prod_{k=m}^{l-1} U^{(k)}V, & \text{if } l\geq m\\
\prod_{k=l}^{m-1} (U^{(k)}V)^\dagger, & \text{otherwise},
\end{cases}
\end{equation}
i.e., it is the unitary that converts between the $m$-th and $l$-th probe states. Additionally, $\vec{\omega}^{(m)}$ refers to the vector associated to the $m$-th probe state; correspondingly $\ket{\psi_l(\vec\omega^{(l)};\varphi)}$ is the branch of $\ket{\psi(\vec\omega^{(l)};\varphi)}$ with non-zero occupation number on mode $l$ and $\ket{\psi_m(\vec\omega^{(m)};\varphi)}$ is the branch of $\ket{\psi(\vec\omega^{(m)};\varphi)}$ with non-zero occupation number on mode $m$. For an optimal protocol, $U(m\leftrightarrow l)$ coherently maps the first (second) branch of $\ket{\psi(\vec\omega^{(l)};\varphi)}$ to the first (second) branch of $\ket{\psi(\vec\omega^{(m)};\varphi)}$; therefore, we have that the matrix element $\bra{\psi_l(\vec\omega^{(l)};\varphi)}U(m\leftrightarrow l)\ket{\psi_m(\vec\omega^{(m)};\varphi)}$ is nonzero if and only if the branches with non-zero occupation on modes $l$ and $m$ are the same. So we have that
\begin{align}
\langle\hat{n}_i(l)\hat{n}_j(m)\rangle=\frac{|\omega^{(l)}_i\omega^{(m)}_j|}{2} \xi_{ij},
\end{align}
where
\begin{equation}
\xi_{ij}:=\begin{cases}
1, & \text{if } i,j\in\mathcal{P} \text{ or } i,j\in\mathcal{N} \\
0, & \text{otherwise.}
 \end{cases}
\end{equation}

Putting everything together we obtain that
\begin{equation}\label{eq:fim_protocol_family}
\mathcal{F}(\vec\theta)_{ij}=(-1)^{\xi_{ij}+1}\left(\sum_{m=1}^M |\omega^{(m)}_i| \right)\left(\sum_{m=1}^M |\omega^{(m)}_j| \right).
\end{equation}

To prove the protocols work, we need to show that this Fisher information matrix obeys the condition in Eq.~(\ref{eq:finalcond}).
Without loss of generality, consider the case that $\norm{\vec\alpha}_{1,\mathcal{P}}\geq \norm{\vec\alpha}_{1,\mathcal{N}}$. 
We have that
\begin{align}
\sum_{j \in \mathcal{P}} \mathcal{F}(\vec\theta)_{ij} &= \mathrm{sgn}(\alpha_i) \left(\sum_{m=1}^M |\omega^{(m)}_i| \right)MN,
\end{align}
where we used that $\norm{\vec\omega}_{1,\mathcal{P}}=N$. So, to obey the condition in Eq.~(\ref{eq:finalcond}), we require that
\begin{equation}
\sum_{m=1}^M |\omega^{(m)}_i| =MN\frac{|\alpha_i|}{\norm{\vec\alpha}_{1,\mathcal{P}}}.
\end{equation}
Or, in vector form:
\begin{equation}\label{eq:fim_cond_protocol_family_app}
\sum_{m=1}^M |\vec\omega^{(m)}| =MN\frac{|\vec\alpha|}{\norm{\vec\alpha}_{1,\mathcal{P}}}.
\end{equation}
Protocols in our family satisfy this condition by construction as, for any valid protocol,
\begin{equation}
\sum_{m=1}^M |\vec\omega^{(m)}|=|W|\vec r,
\end{equation}
where $|W|$ denotes taking the element-wise absolute value of the elements of $W$. Consequently, noting that $\sgn(\omega^{(m)}_j)=\sgn(\alpha_j)$ for all $m$, we require
\begin{equation}
W\vec r=MN\frac{\vec\alpha}{\norm{\vec\alpha}_{1,\mathcal{P}}},
\end{equation}
which is Eq.~(\ref{eq:main_equation-2}).

\section{Proof of Lemma 1}\label{app:proof-of-lemma-1}

Here we provide a proof of Lemma~\ref{lemma:opt_states} in the main text, restated here for convenience.
\begin{lemma}
Any optimal protocol using $N$ photons and $M$ passes through interferometers with a coupling as in Eq.~(\ref{eq:initialH}) with $\hat{g}_j=\hat{n}_j$ requires that, for every pass $m$, the probe state $\ket{\psi_m}$ be of the form
\begin{equation}
    \ket{\psi_m} \propto \ket{\vec{N}(m)}_{\mathcal{P}}\ket{\vec{0}}_{\mathcal{NR}} + e^{i\varphi_m}\ket{\vec{0}}_{\mathcal{P}}\ket{\vec{N'}(m)}_{\mathcal{NR}},
\end{equation}
where $\mathcal{P}$, $\mathcal{N}$, and $\mathcal{R}$ represent the modes with $\alpha_j\geq 0$, $\alpha_j<0$, and the (arbitrary number of) reference modes, respectively, $\vec{N}(m)$ and $\vec{N'}(m)$ are strings of occupation numbers such that $|\vec{N}(m)|=|\vec{N'}(m)|=N$ for all passes $m$. $\varphi_m$ is an arbitrary phase. 
\end{lemma}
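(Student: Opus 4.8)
The plan is to work directly from the quantum Fisher information matrix for $\hat g_j = \hat n_j$ in the no-control-between-the-relevant-pass form, Eq.~(\ref{eq:fim_general}), and impose the saturability conditions Eqs.~(\ref{eq:condonfim})–(\ref{eq:condonfimoffdiagonal_theta}), equivalently the single vector equation Eq.~(\ref{eq:finalcond}), pass by pass. The key observation is that at each pass $m$ the relevant generator is $\hat n_j(m)$, a conjugate of $\hat n_j$ by a fixed unitary, so it still has nonnegative integer spectrum with eigenvalues summing to $N$ on each branch. First I would fix a pass $m$ and let $\ket{\psi_m}$ be the state just before that pass. Writing $\mathcal{F}(\vec\theta)_{ij}$ as a double sum over passes $l,m$, I would extract the ``diagonal in passes'' contribution and note that the full saturability condition forces, term by term, that $\mathrm{Var}(\sum_{j\in\mathcal P}\hat n_j(m))$ be as large as the seminorm bound Eq.~(\ref{eq:seminorm_bound}) permits, namely $(N/2)^2$ worth of variance once one tracks the factors through; simultaneously the off-diagonal conditions in Eq.~(\ref{eq:condonfimoffdiagonal_theta}) force all cross-pass coherences to survive (this is where Lemma~\ref{lemma:coherent-mapping} enters, guaranteeing the inter-pass unitaries preserve the two-branch structure).

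The heart of the argument is the single-pass extremal problem: among states on the $N$-photon sector, which ones make $\mathrm{Var}(\sum_{j\in\mathcal P}\hat n_j)$ (after the appropriate basis rotation, $\sum_{j\in\mathcal P}\hat n_j(m)$) equal to its maximum $(N/2)^2$? Since $\hat{N}_{\mathcal P} := \sum_{j\in\mathcal P}\hat n_j$ has integer spectrum $\{0,1,\dots,N\}$ on the $N$-photon sector, a variance of $(N/2)^2$ is achieved only by an equal superposition of the $\hat{N}_{\mathcal P}=N$ eigenspace and the $\hat{N}_{\mathcal P}=0$ eigenspace — this is the standard fact that a random variable supported on $[0,N]$ with variance $(N/2)^2$ must be a fair coin flip on the endpoints. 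The $\hat{N}_{\mathcal P}=N$ eigenspace consists of states with all $N$ photons in $\mathcal P$-modes (so nothing in $\mathcal{N}$ or $\mathcal{R}$), i.e. $\ket{\vec N(m)}_{\mathcal P}\ket{\vec 0}_{\mathcal{NR}}$; the $\hat{N}_{\mathcal P}=0$ eigenspace consists of states with no photons in $\mathcal P$-modes, i.e. $\ket{\vec 0}_{\mathcal P}\ket{\vec N'(m)}_{\mathcal{NR}}$ with $|\vec N'(m)|=N$. I would then argue that, within each endpoint eigenspace, the saturability conditions involving $\mathcal{F}_{ij}$ for $i,j$ both in $\mathcal P$ (and the analogous ones mixing in $\mathcal N$) further pin down that the $\mathcal P$-branch cannot itself be a nontrivial superposition over different occupation strings without violating the off-diagonal vanishing conditions Eq.~(\ref{eq:condonfimoffdiagonal_theta}) — forcing a definite Fock string $\vec N(m)$, up to the overall relative phase $e^{i\varphi_m}$ which is unconstrained. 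Putting the two endpoints together with an arbitrary relative phase gives exactly the claimed form.

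I expect the main obstacle to be the bookkeeping that turns the abstract conditions Eq.~(\ref{eq:condonfim})–(\ref{eq:condonfimoffdiagonal_theta}) — which are statements about $\mathcal{F}(\vec q)$, i.e. about linear combinations $\vec\beta^*\cdot\hat{\vec g}$ with $\vec\beta^* = \vec\alpha/\norm{\vec\alpha}_{1,\mathcal P}$ restricted to $\mathcal P$ — into the clean single-pass variance-extremization statement, while correctly handling the cross-pass sums in Eq.~(\ref{eq:fim_general}) and the role of the inter-pass control unitaries. In particular one must rule out ``trading off'' variance between different passes (a deficit at pass $m$ compensated by an excess elsewhere is impossible because $(N/2)^2$ is already the per-pass maximum), and one must invoke coherent mapping (Lemma~\ref{lemma:coherent-mapping}) to ensure the branch structure identified at one pass is not destroyed before the next; these are the technically fussy but conceptually routine steps, which is why the full computation is deferred to the appendix. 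Everything else reduces to the elementary endpoint-distribution fact about bounded random variables and straightforward linear algebra on the Fisher matrix.
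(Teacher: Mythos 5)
Your proposal is correct and follows essentially the same route as the paper's proof: sum the optimality condition Eq.~(\ref{eq:finalcond}) over $i,j\in\mathcal{P}$ to get $(MN)^2$, upper-bound that same sum via Cauchy--Schwarz and the $N$-photon seminorm bound on $\hat{P}(m)=\sum_{j\in\mathcal{P}}\hat{n}_j(m)$, and conclude that every per-pass variance must saturate at $N^2/4$, which forces the equal endpoint superposition. The only structural difference is that the paper derives the cross-pass coherence (your Lemma~\ref{lemma:coherent-mapping} invocation) as a \emph{consequence} of the same saturated inequality chain rather than using it as an input, but this does not affect correctness.
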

\begin{proof}
The quantum Fisher information matrix elements for any protocol with $\hat{g}_j=\hat{n}_j$ are given by
\begin{align}\label{eq:fim_general_app}
\mathcal{F}(\vec\theta)_{ij}&=4\Bigg[\sum_{l=1}^M\sum_{m=1}^M\frac{1}{2}\langle\{\hat{n}_i(l),\hat{n}_j(m)\}\rangle-\left(\sum_{m=1}^M\langle\hat{n}_i(m)\rangle\right)\left(\sum_{m=1}^M\langle\hat{n}_j(m)\rangle\right)\Bigg]\nonumber \\
&=4\sum_{l=1}^M\sum_{m=1}^M \mathrm{Cov}\left(\hat{n}_i(l),\hat{n}_j(m) \right),
\end{align}
where the expectation values are taken with respect to the initial probe state, and $\hat n_j(m)$ are the number operators on the $j^\mathrm{th}$ mode in the Heisenberg picture prior to the $m^\mathrm{th}$ pass, as specified in Eq.~(\ref{eq:generator}). Without loss of generality, we make the assumption that $\norm{\vec\alpha}_{1,\mathcal{P}}\geq \norm{\vec\alpha}_{1,\mathcal{N}}$. Summing over $i,j\in\mathcal{P}$, we have that, for an optimal protocol,
\begin{align}\label{eq:finalcond_res}
\sum_{i\in\mathcal{P}}\sum_{j\in\mathcal{P}}\mathcal{F}(\vec\theta)_{ij}=\sum_{j\in\mathcal{P}}\frac{(MN)^2}{\norm{\vec\alpha}_{1,\mathcal{P}}}\alpha_j=(MN)^2,
\end{align}
where we used the condition in Eq.~(\ref{eq:finalcond}) for an optimal protocol, and we recall that, for $j \in \mathcal{P}$, all $\alpha_{j} > 0$. For convenience, define 
\begin{equation}
\hat{P}(m):=\sum_{j\in\mathcal{P}}\hat{n}_j(m).
\end{equation}
Armed with this definition, we can upper bound the sum over $i,j\in\mathcal{P}$ in the explicit expression from Eq.~(\ref{eq:fim_general_app}) as

\begin{align}
\sum_{i\in\mathcal{P}}\sum_{j\in\mathcal{P}}\mathcal{F}(\vec\theta)_{ij}&=4\sum_{l=1}^M\sum_{m=1}^M \mathrm{Cov}\left(\hat{P}(l),\hat{P}(m) \right) \nonumber\\
&\leq 4\sum_{l=1}^M\sum_{m=1}^M \sqrt{\mathrm{Var}(\hat{P}(l))\mathrm{Var}(\hat{P}(m))}=4\left(\sum_{l=1}^M\sqrt{\mathrm{Var}(\hat{P}(l))}\right)^2\nonumber\\
&\leq 4\left(\sum_{l=1}^M\frac{\norm{\hat{P}(l)}_{s,N}}{2}\right)^2 \nonumber\\
&\leq (NM)^2, \label{eq:bnd_on_fim}
\end{align}
where in the first line we use the Cauchy-Schwarz inequality, in the second line we use that once restricted to the $N$-particle subspace $\mathrm{Var}(A)\leq \norm{A}_{s,N}^2/4$ (where, again, $\norm{A}_{s,N}$ is the seminorm restricted to the $N$-particle subspace) for any Hermitian operator $A$, and in the final line we use that $\norm{\hat{P}(l)}_{s,N}\leq N$. Comparing Eq.~(\ref{eq:bnd_on_fim}) with Eq.~(\ref{eq:finalcond_res}), we find that, for any optimal protocol, all inequalities in Eq.~(\ref{eq:bnd_on_fim}) must be saturated. Specifically,
\begin{align}
    \mathrm{Cov}\left(\hat{P}(l),\hat{P}(m) \right)^2 &= \mathrm{Var}(\hat{P}(l))\mathrm{Var}(\hat{P}(m)), \label{eq:saturation_condition_1}\\
    \mathrm{Var}(\hat{P}(l)) &= \frac{N^2}{4} \label{eq:saturation_condition_2}.
\end{align}
The second condition, Eq.~(\ref{eq:saturation_condition_2}), means that, at all times, the state of our system must be of the form
\begin{equation}
    \frac{\ket{\vec{N}(l)}_{\mathcal{P}}\ket{\vec{0}}_{\mathcal{NR}} + e^{i\varphi_l}\ket{\vec{0}}_{\mathcal{P}}\ket{\vec{N'}(l)}_{\mathcal{NR}}}{\sqrt{2}},
\end{equation}
where we are using the simplifying notation from the statement of the lemma. In particular, the subscripts $\mathcal{P}, \mathcal{N}, \mathcal{R}$ refer to the collection of all modes associated with $\alpha_j \geq 0, \alpha_j < 0$, and the reference modes, respectively. Therefore, the state $\ket{\vec{N}}_{\mathcal{P}}\ket{\vec{0}}_{\mathcal{NR}}$ means that all photons are distributed (in some potentially arbitrary way) amongst the modes with non-negative $\alpha_j$, and there are no photons in the modes with negative $\alpha_j$ or in the reference modes. Contrastingly, $\ket{\vec{0}}_{\mathcal{P}}\ket{\vec{N'}(l)}_{\mathcal{NR}}$ refers to a state where there are $N$ photons in the negative and reference modes, and there are no photons in the non-negative modes. We have also shifted to the Schr{\"o}dinger picture where we move the time dependence onto the state as opposed to the operators. It is simple to verify that this state satisfies Eq.~(\ref{eq:saturation_condition_2}), and it is also simple to verify these are the most general states that achieve this. Intuitively, $\ket{\psi_m}$ is a generalized N00N state between the positive and negative/reference modes.
\end{proof}

In addition, we have the following useful characterization of optimal protocols:
\begin{lemma}\label{lemma:coherent-mapping}
Let $\ket{\psi_{i}}$ be a state of the form in Lemma~\ref{lemma:opt_states}. Refer to the first and second parts of its superposition as, respectively, the first and second or positive and non-positive branches. Let $U_{m}$ be the unitary that maps the initial state $\ket{\psi_{1}}$ to the state just before the $m$-th pass, $\ket{\psi_{m}}$, given by
\begin{equation}
    U_m = \begin{cases}
    \prod_{i=1}^{m-1} U^{(i)}V, & M+1\geq m\geq 2 \\
    I, & m=1.
    \end{cases}
\end{equation}
in agreement with Eq.~(\ref{eq:full_unitary}). Then, if $U_{m}$ is part of an optimal protocol, it coherently maps the first (second) branch of $\ket{\psi_1}$ to the first (second) branch of $\ket{\psi_m}.$
\end{lemma}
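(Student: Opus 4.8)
The plan is to show that the coherent-mapping property is not an additional hypothesis but is forced by exactly the saturation conditions already extracted in the proof of Lemma~\ref{lemma:opt_states}. Recall from \cref{app:proof-of-lemma-1} that optimality collapses the chain \cref{eq:bnd_on_fim} to a string of equalities with common value $(NM)^2$, matching \cref{eq:finalcond_res}. In particular, equating $4\sum_{l,m}\mathrm{Cov}(\hat P(l),\hat P(m))$ with its Cauchy--Schwarz bound $4\bigl(\sum_l\sqrt{\mathrm{Var}(\hat P(l))}\bigr)^2$ \emph{as sums} forces the signed identity $\mathrm{Cov}(\hat P(l),\hat P(m))=\tfrac{N^2}{4}$ for \emph{every} pair $l,m$ — strengthening \cref{eq:saturation_condition_1}, since a single negative cross term would make the first sum strictly smaller — where $\hat P(m)=\sum_{j\in\mathcal P}\hat n_j(m)$ and we used \cref{eq:saturation_condition_2}. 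So the real task is to decode what this cross-pass covariance says about $U_l$, $U_m$, and the branch structure guaranteed by Lemma~\ref{lemma:opt_states}.

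First I would fix notation. By Lemma~\ref{lemma:opt_states}, $\ket{\psi_m}=U_m\ket{\psi_1}=\tfrac{1}{\sqrt2}\bigl(\ket{\phi_m^{(1)}}+e^{i\varphi_m}\ket{\phi_m^{(2)}}\bigr)$, where $\ket{\phi_m^{(1)}}=\ket{\vec N(m)}_{\mathcal P}\ket{\vec 0}_{\mathcal{NR}}$ and $\ket{\phi_m^{(2)}}=\ket{\vec 0}_{\mathcal P}\ket{\vec N'(m)}_{\mathcal{NR}}$ are normalized; crucially, because $\hat P$ conserves photon number in the $\mathcal P$ modes and these branches carry $N$ and $0$ such photons, they are eigenvectors of $\hat P$ with eigenvalues $N$ and $0$, so $\hat P\ket{\psi_m}=\tfrac{N}{\sqrt2}\ket{\phi_m^{(1)}}$. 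Using $\hat n_j(m)=U_m^\dagger\hat n_j U_m$ from \cref{eq:generator}, hence $\hat P(m)=U_m^\dagger\hat P U_m$, this gives $\hat P(m)\ket{\psi_1}=\tfrac{N}{\sqrt2}U_m^\dagger\ket{\phi_m^{(1)}}$, from which $\langle\hat P(m)\rangle_{\psi_1}=\tfrac N2$ and $\langle\hat P(l)\hat P(m)\rangle_{\psi_1}=\tfrac{N^2}{2}\bra{\phi_l^{(1)}}U_l U_m^\dagger\ket{\phi_m^{(1)}}$. Substituting into $\mathrm{Cov}(\hat P(l),\hat P(m))=\mathrm{Re}\langle\hat P(l)\hat P(m)\rangle-\langle\hat P(l)\rangle\langle\hat P(m)\rangle=\tfrac{N^2}{4}$ yields $\mathrm{Re}\bra{\phi_l^{(1)}}U_l U_m^\dagger\ket{\phi_m^{(1)}}=1$.

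The key step is then immediate: the left side is the real part of an overlap of two unit vectors related by a unitary, hence has modulus at most $1$, so $\mathrm{Re}=1$ forces the overlap to equal $1$, i.e.\ $U_m U_l^\dagger\ket{\phi_l^{(1)}}=\ket{\phi_m^{(1)}}$. Taking $l=1$ with $U_1=I$ gives $U_m\ket{\phi_1^{(1)}}=\ket{\phi_m^{(1)}}$ — the first branch is mapped coherently to the first branch. Finally, since $U_m\ket{\psi_1}=\ket{\psi_m}$ and both are equal-weight superpositions whose first-branch components now agree, unitarity forces $U_m\ket{\phi_1^{(2)}}=e^{i(\varphi_m-\varphi_1)}\ket{\phi_m^{(2)}}$, so the second branch maps to the second branch (the residual phase being exactly the accumulated signal absorbed into $\varphi_m$); taking $\varphi_1=0$ as in the protocol recovers the stated form. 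I expect the only genuine obstacle to be bookkeeping rather than conceptual: keeping the Heisenberg/Schr\"odinger dictionary straight, being careful that the Cauchy--Schwarz saturation delivers the \emph{signed} cross-pass identity rather than merely $\mathrm{Cov}^2=\mathrm{Var}\,\mathrm{Var}$, and phrasing the conclusion so the overall-phase freedom in the branch decomposition is pinned down unambiguously. The boundary index $m=M+1$ is handled identically by formally appending a trivial final pass.
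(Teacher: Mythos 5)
Your proof is correct and follows essentially the same route as the paper's: evaluate $\langle\hat P(m)\rangle=N/2$ and the cross-pass covariance in terms of the overlap $\bra{\phi_l^{(1)}}U_lU_m^\dagger\ket{\phi_m^{(1)}}$, force that overlap to equal $1$ from the saturation conditions, set $l=1$, and get the second branch by linearity/unitarity. Your one refinement — observing that saturating the double-sum Cauchy--Schwarz bound forces the \emph{signed} identity $\mathrm{Cov}(\hat P(l),\hat P(m))=+N^2/4$, which rules out the spurious overlap-$0$ solution that the paper's squared condition alone would permit — is a small but genuine tightening of the argument.
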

\begin{proof}
We use the covariance equality in Eq.~(\ref{eq:saturation_condition_1}).  To proceed, we evaluate the expectation value of $\hat{P}$ in the initial state. Here, we will again use the Schr{\"o}dinger picture. 
\begin{align}
    \bra{\psi_1}\hat{P}(l)\ket{\psi_1} &= \bra{\psi_{l}}\hat{P}\ket{\psi_{l}} \\
    &= \frac{1}{2}\left( \bra{\vec{N}(l)}_{\mathcal{P}}\bra{\vec{0}}_{\mathcal{NR}} + e^{-i\varphi_l}\bra{\vec{0}}_{\mathcal{P}}\bra{\vec{N'}(l)}_{\mathcal{NR}}\right)\hat{P}\left(\ket{\vec{N}(l)}_{\mathcal{P}}\ket{\vec{0}}_{\mathcal{NR}} + e^{i\varphi_{l}}\ket{\vec{0}}_{\mathcal{P}}\ket{\vec{N'}(l)}_{\mathcal{NR}}\right) \\
    &=\frac{1}{2}\left( \bra{\vec{N}(l)}_{\mathcal{P}}\bra{\vec{0}}_{\mathcal{NR}} + e^{-i\varphi_l}\bra{\vec{0}}_{\mathcal{P}}\bra{\vec{N'}(l)}_{\mathcal{NR}}\right)N\left(\ket{\vec{N}(l)}_{\mathcal{P}}\ket{\vec{0}}_{\mathcal{NR}}\right)\\
    &= \frac{N}{2}. 
\end{align}
We next evaluate the covariance:
\begin{align}
    \mathrm{Cov}\left(\hat{P}(l), \hat{P}(m)\right) &= \bra{\psi_1}\hat{P}(l)\hat{P}(m)\ket{\psi_1} - \bra{\psi_1}\hat{P}(l)\ket{\psi_1}\bra{\psi_1}\hat{P}(m)\ket{\psi_1} \\
    &= \bra{\psi_l}\hat{P}U_{l} U^{\dag}_{m} \hat{P}\ket{\psi_m} - \bra{\psi_l}\hat{P}\ket{\psi_l}\bra{\psi_m}\hat{P}\ket{\psi_m} \\
    &= \frac{N^2}{2}\bra{\vec{N}(l)}_{\mathcal{P}}\bra{\vec{0}}_{\mathcal{NR}}U_{l} U^\dag_{m} \ket{\vec{N}(m)}_{\mathcal{P}}\ket{\vec{0}}_{\mathcal{NR}} - \frac{N^2}{4},
\end{align}
where in the last line we have used the fact that $\hat{P}$ gives a factor of $N$ when acting on the first branch of states $\ket{\psi_l}$ and $\ket{\psi_m}$, but it annihilates the second branch that has zero photons in the positive modes. 

In order for Eq.~(\ref{eq:saturation_condition_1}) to be satisfied, and using Eq.~(\ref{eq:saturation_condition_2}), we therefore require that, for all pairs of passes $l, m,$
\begin{equation}
    \bra{\vec{N}(l)}_{\mathcal{P}}\bra{\vec{0}}_{\mathcal{NR}}U_{l} U^\dag_{m} \ket{\vec{N}(m)}_{\mathcal{P}}\ket{\vec{0}}_{\mathcal{NR}} = 1.
\end{equation}
Choosing $l = 1$, this implies that we require that
\begin{equation}
    U^\dag_m \ket{\vec{N}(m)}_{\mathcal{P}}\ket{\vec{0}}_{\mathcal{NR}} = \ket{\vec{N}(0)}_{\mathcal{P}}\ket{\vec{0}}_{\mathcal{NR}} \equiv \ket{\psi_1}_{\mathcal{P}},
\end{equation}
where we are defining $ \ket{\psi_1}_{\mathcal{P}} ,\ket{\psi_1}_{\mathcal{NR}}$ such that
$\ket{\psi_0} \propto \ket{\psi_1}_{\mathcal{P}} + \ket{\psi_1}_{\mathcal{NR}}$ in the obvious way. Moving the unitary onto the right hand side of the equation yields
\begin{equation}
    \ket{\psi_m}_{\mathcal{P}} =  U_m\ket{\psi_1}_{\mathcal{P}},
\end{equation}
which of course implies the corresponding equation for the second branch by linearity.
\end{proof}

\section{Fisher Information Matrix Conditions for Quadrature Displacements}\label{app:fim_quad}
In this appendix, we provide conditions on the quantum Fisher information matrix for an optimal protocol in the case of quadrature generators. This result yields a simpler form of the saturability condition of Eq.~(\ref{eq:finalcondquad}), although the set of states that it picks out is less clear than in the number operator case. This issue is compounded by the fact that the bound is not actually saturable (it can only be approached asymptotically as $\overline{N}\rightarrow\infty$). Regardless, it allows us to bring quadrature displacements into our general formalism and suggests a route towards designing additional optimal protocols beyond those already in the literature. 

In particular, starting with the definition of $\hat p_i(l)$ from Eq.~(\ref{eq:fim_general_quad}), we can bound the sum over the quantum Fisher information matrix elements as
\begin{align}
\sum_{i=1,j=1}^{d} \mathcal{F}(\vec\theta)_{ij} &= \sum_{i=1,j=1}^{d}4\sum_{l=1}^M\sum_{m=1}^M \mathrm{Cov}(\hat p_i(l), \hat p_j(m)) \\
&\leq 4\sum_{l=1}^M\sum_{m=1}^M\sqrt{\mathrm{Var}\Big(\sum_{i=1}^{d} \hat p_i(l)\Big)\mathrm{Var}\Big(\sum_{i=1}^{d} \hat p_j(m)\Big)} \label{eq:aa}\\
&= 4 \left(\sum_{l=1}^M \sqrt{\mathrm{Var}\Big(\sum_{i=1}^{d} \hat p_i(l)\Big)}\right)^2 \\
&\leq 4 \left(\sum_{l=1}^M\sqrt{\overline{N}-\frac{d}{2}} \right)^2 = 4M^2 \left(\overline{N}-\frac{d}{2}\right) \sim 4M^2\overline{N}. \label{eq:aaa}
\end{align}
Above, in Eq.~(\ref{eq:aa}), we used the Cauchy-Schwarz inequality; in Eq.~(\ref{eq:aaa}), we used the uncertainty relation in Eq.~(\ref{eq:a}). Consistent with the rest of the paper, the $\sim$ symbol denotes asymptotically in $\overline{N}$ (for $\overline{N}\gg d$).

The saturability condition in Eq.~(\ref{eq:finalcondquad}) states that, for an optimal protocol (asymptotically in $\overline{N}$), it must hold that $\vec\alpha$ is an eigenvector of $\mathcal{F}(\vec\theta)$ with eigenvalue $4M^2\overline{N}$.  Thus, for an optimal protocol,
\begin{align}
\mathrm{Tr}(\mathcal{F})=\sum_{j=1}^d \lambda_j \gtrsim 4M^2\overline{N},
\end{align}
where $\lambda_j$ are the eigenvalues of $\mathcal{F}$. This implies that the chain of inequalities leading to Eq.~(\ref{eq:aaa}) must be saturated (asymptotically in $\overline{N}$) for an optimal protocol and that the largest eigenvalue of $\mathcal{F}$ must be $\lambda_1\sim 4m^2\overline{N}$ with all other eigenvalues zero. It immediately follows that the saturability condition  for quadrature displacements can be written as
\begin{equation}
\mathcal{F}(\vec\theta)_{ij}\sim \frac{4M^2\overline{N}}{\norm{\vec\alpha}_2^2}\alpha_i\alpha_j.
\end{equation}

\section{Approaching the Single-Shot Limit and Robust Phase Estimation}\label{app:robust-phase-estimation}

As pointed out in the footnote preceding \cref{eq:qcrb2} and in the discussion of what defines an information-theoretically optimal protocol in Sec.~\ref{s:new-prot}, it is not, in practice, possible to construct an unbiased estimator achieving the single shot ($\mu=1$) quantum Cram\'{e}r-Rao bound that we analyze in this paper, as the quantum Cram\'{e}r-Rao bound is only guaranteed to be achievable in the limit of asymptotically large amounts of data ($\mu\rightarrow\infty$). Resolving this tension while still achieving asymptotic Heisenberg scaling in the total amount of resources (here, $\mu N$ photons) requires carefully designed protocols. In particular, extracting a relative phase from the probe states considered in the protocols in this paper requires a proper division of resources so that, asymptotically, the single-shot bound is achieved up to a small constant. 

At best, this constant can be reduced to $\pi^2$~\cite{gorecki2020pi}, but the non-adaptive robust phase estimation scheme of Refs.~\cite{kimmel2015robust,kimmel2015robusterratum,belliardo2020achieving} provides a relatively simple-to-implement approach with a multiplicative overhead of $(24.26\pi)^2$. In brief, these protocols work by dividing the protocol into $K$ stages where in stage $j$ one uses $N_j$ photons (or $\overline{N}_{j}$ average photons for displacement sensing). In each stage, one imprints the unknown function into the phase between two branches of a cat-like state of $N_{j}$ photons and then performs a measurement, as described in the main text. The experiment is performed $\nu_j$ times, allowing one to obtain an estimate of the unknown phase. This estimate is refined over the course of the $K$ stages, with more photons used in each additional stage such that the total photon resources are 
\begin{equation}
N=\sum_{j=1}^K \nu_j N_j.
\end{equation}

An optimal choice of $\nu_j$ and $N_j$ ensures that, asymptotically, $N_K=\Theta(N)$ and $\nu_K=\mathcal{O}(1)$, and, thus, the asymptotic scaling of the single-shot bound is obtained up to a multiplicative constant that depends on the details of the optimization. The proof of this and the associated optimization are detailed in Refs.~\cite{kimmel2015robust,kimmel2015robusterratum,belliardo2020achieving}.

\end{appendix}

\bibliography{main.bib}
\end{document}